\title{True Parallel Graph Transformations: an Algebraic Approach
  Based on Weak Spans}
\author{Thierry  Boy de la Tour \and Rachid Echahed}
\theoremstyle{plain}
\newtheorem{theorem}{Theorem} [section]
\newtheorem{lemma}[theorem]{Lemma}
\newtheorem{corollary}[theorem]{Corollary}
\newtheorem{definition}[theorem]{Definition}
\theoremstyle{nonumberplain}
\newtheorem{proof}{Proof}
\def\clap#1{\hbox to 0pt{\hss#1\hss}}
\definecolor{lblue}{rgb}{0.4,0.4,1}
\definecolor{lgray}{rgb}{0.8,0.8,0.8}
\newcommand{\set}[1]{\{#1\}}
\newcommand{\setof}[2]{\{#1\mid #2\}}
\newcommand{\card}[1]{|#1|}
\newcommand{\tuple}[1]{(#1)}
\newcommand{\Nat}{\mathds{N}}
\newcommand{\Int}{\mathds{Z}}
\newcommand{\ensvide}{\varnothing}
\newcommand{\FPartf}{\mathscr{P}_{<\omega}}
\newcommand{\defeq}{\stackrel{\mathrm{\scriptscriptstyle def}}{=}}
\newcommand{\invf}[1]{#1^{-1}}
\newcommand{\Sets}{\mathbf{Sets}}
\newcommand{\FSets}{\mathbf{FinSets}}
\newcommand{\injFS}{\mathscr{I}}
\newcommand{\aCat}{\mathcal{C}}
\newcommand{\aFcat}{\mathcal{F}}
\newcommand{\attrCat}{\mathcal{A}}
\newcommand{\Vfun}{\mathscr{E}}
\newcommand{\Sfun}{\mathscr{S}}
\newcommand{\FAS}{\mathbf{FinAttr}(V,U)}
\newcommand{\dirtrans}[2]{\Delta(#1,#2)}
\newcommand{\dirtransn}[2]{\Delta_{\mathrm{n}}(#1,#2)}
\newcommand{\adt}{\gamma}
\newcommand{\anassocdt}{\delta}
\newcommand{\weakspan}{weak span}
\newcommand{\arule}{\rho}
\newcommand{\R}{\mathcal{R}}
\newcommand{\PO}{{\footnotesize PO}}
\newcommand{\id}[1]{\mathrm{id}_{#1}}
\newcommand{\trans}[1]{\xRightarrow{#1}}
\newcommand{\assoc}[1]{\check{#1}}
\newcommand{\addmorph}[2]{\binom{#1}{#2}}
\newcommand{\M}{\mathcal{M}}
\newcommand{\upv}{$\!\!\!u\!+\!v\!\!\!$}
\newcommand{\grfib}[2]{\raisebox{-2.7mm}{\begin{tikzpicture}[xscale=0.5]
  \node [draw, circle, font=\footnotesize, minimum width=7mm] (xh) at (-1,0) {\raisebox{-1ex}{\smash{#1}}};
  \node [draw, circle, font=\footnotesize, minimum width=7mm] (yh) at (1,0) {\raisebox{-1ex}{\smash{#2}}};
  \path[->] (xh) edge (yh);
\end{tikzpicture}}}
\newcommand{\grfibu}[1]{\raisebox{-2.7mm}{\begin{tikzpicture}
  \node [draw, circle, font=\footnotesize, minimum width=7mm] (xh) at (0,0) {\raisebox{-1ex}{\smash{#1}}};
\end{tikzpicture}}}
\newcommand{\drawhex}[2]{\draw ($2*(#1,0)+ -1*(#2,0)+ 6*(0,#2)+(-1,2) $) -- ++(1,2) -- ++(1,-2) -- ++(0,-4) -- ++(-1,-2) -- ++(-1,2) -- ++(0,4)}
\newcommand{\fillhex}[3]{\fill[#3] ($2*(#1,0)+ -1*(#2,0)+ 6*(0,#2)+(-1,2) $) -- ++(1,2) -- ++(1,-2) -- ++(0,-4) -- ++(-1,-2) -- ++(-1,2) -- ++(0,4)}
\newcommand{\hexmap}[1]{\tikzmath{
  int \x,\y, \n;
  \n = #1;
  for \x in {0,...,\n}{
    for \y in {0,...,\n-\x}{ print{\drawhex{-\x}{\y};\drawhex{\x}{-\y};};};};
  for \x in {0,...,\n}{
    for \y in {0,...,\n}{ print{\drawhex{\x}{\y};\drawhex{-\x}{-\y};};};};
}}
\newcommand{\hexmapb}[1]{\tikzmath{
  int \x, \y, \n;
  \n = #1;
  for \x in {0,...,\n}{
    for \y in {0,...,\n-\x}{ print{\fillhex{-\x}{\y}{lblue};\fillhex{\x}{-\y}{lblue};};};};
  for \x in {0,...,\n}{
    for \y in {0,...,\n}{ print{\fillhex{\x}{\y}{lblue};\fillhex{-\x}{-\y}{lblue};};};};
}}
\begin{document}
\maketitle
\begin{abstract}
  We address the problem of defining graph transformations by the
  simultaneous application of direct transformations even when these
  cannot be applied independently of each other. An algebraic approach
  is adopted, with production rules of the form
  $L\xleftarrow{l}K \xleftarrow{i} I \xrightarrow{r} R$, called
  \emph{weak spans}. A \emph{parallel coherent} transformation is
  introduced and shown to be a conservative extension of the
  interleaving semantics of parallel independent direct
  transformations. A categorical construction of \emph{finitely
    attributed structures} is proposed, in which parallel coherent
  transformations can be built in a natural way. These notions are
  introduced and illustrated on detailed examples.
\end{abstract}

\section{Introduction}\label{sec-intro}

Graph transformations~\cite{handbook1} constitute a natural extension of
string rewriting~\cite{BO93} and term rewriting~\cite{BN98}. Due to the visual
and intuitive appearance of their structures, graph rewrite systems
play an important role in the modeling of complex systems in various
disciplines including computer science, mathematics, biology,
chemistry or physics.

Computing with graphs as first-class citizens requires
the use of advanced graph-based computational models. Several approaches
to graph transformations have been proposed in the literature,
divided in two lines of research: the algebraic approaches (e.g.
\cite{handbook1,EhrigEPT06}) where transformations are defined using
notions of category theory, and the algorithmic approaches
(e.g. \cite{EngelfrietR97,Echahed08b}) where graph transformations are
defined by means of the involved algorithms.

 
In this context, 
parallelism is generally understood as the problem of performing in
one step what is normally achieved in two or more sequential steps. This is
easy when these steps happen to be independent, a situation analogous
to the expresion $x\coloneq z+1;\, y\coloneq z+2$ which could be
executed in any order, hence also in parallel, yielding exactly the
same result in each case. If the two steps are not sequentially
independent, it may also be possible to synthesize a new production
rule that accounts for the sequence of transformations in one step
(see the Concurrency Theorem in, e.g., \cite{EhrigEPT06}). This parallel rule
obviously depends on the order in which this sequence in considered,
if more than one is possible. As long as parallelism refers to a
sequence of transformations, this synthesis can only be commutative if
the order of the sequence is irrelevant, i.e., in case of sequential
independence.

We can also understand parallelism as a way of expressing a
transformation as the \emph{simultaneous} execution of two (or more)
basic transformations. To see how this could be meaningful even when
independence does not hold, let us consider a transformation intended
to compute the next item in the Fibonacci sequence, given by
$u_{n+1}=u_{n-1}+u_n$. Since it depends on the two previous items
$u_{n-1}$ and $u_n$, we need to save these in two placeholders, say
$x$ and $y$ respectively. As we compute the new value $x+y$ of $y$ we
also need to transfer the old value of $y$ to $x$,
simultaneously. That is, we need to execute two expressions in
parallel:
\begin{equation}
  \label{eq:2}
  x\coloneq y \ ||\ y\coloneq x+y
\end{equation}
It is clear that executing these expressions in sequence in one or the
other order yield two different results, hence they are not
independent, and that both results are incorrect w.r.t. the intended
meaning. This notion of parallelism ought to be commutative in the
sense that (\ref{eq:2}) is equivalent to $y\coloneq x+y \ ||\ x\coloneq
y$, hence it cannot refer to a sequence of transformations.

Of course, it is easy to express (\ref{eq:2}) as a sequence of
expressions using an intermediate placeholder (though this breaks the
symmetry between the two expressions), or simply as a single graph
transformation rule (see Section~\ref{sec-example}). The point of the
present paper is to define the simultaneous application of possibly non
independent graph transformation rules, and to identify the situations
in which this is possible.

For sake of generality we adopt an algebraic approach departing from
the Double-Pushout model by adding a key ingredient, as explained in
Section~\ref{sec-weakspan}. In Section~\ref{sec-coherence} the notion
of parallel coherence is developed, which allows the construction of
parallel coherent transformations. A general comparison with parallel
independence is provided in Section~\ref{sec-comparison}. In
Section~\ref{sec-finiteattr} we show how to build categories in which
such constructions are guaranteed to exist. In
Section~\ref{sec-example} all these notions are illustrated on the
example given above and on a cellular automaton. Related and future
work are considered in Section~\ref{sec-conclusion}.

\section{Weak Spans}\label{sec-weakspan}

In order to represent the expressions given in (\ref{eq:2}) as graph
transformation rules, we first represent the state of the
system as some form of graph. Since we need to hold (and compute with)
natural numbers, this obviously requires the use of attributes. For
sake of simplicity we represent placeholders for $x$ and $y$ as nodes
and put an arrow from $x$ to $y$, hence placeholder $x$ is identified as the
source and $y$ as the sink, so that no confusion is possible between
the two. The contents of the placeholders are represented as
attributes of the corresponding nodes, e.g., \grfib{1}{2} represents
the state $\tuple{x,y}=\tuple{1,2}$. This state is correct in the sense that
$\tuple{x,y} = \tuple{u_{n-1},u_n}$ for some $n$.

The left hand side of a production rule corresponding to $y\coloneq
x+y$ should then be the graph $L=\grfib{$u$}{$v$}$, where $u$ and $v$
are the contents of placeholders $x$ and $y$ respectively. The right
hand side should idealy be restricted to $R=\grfibu{\upv}$, to be
matched to placeholder $y$, since $y\coloneq x+y$ has no effect on
$x$; the only effect is on $y$'s content, which should be replaced by
$u+v$. In the Double-Pushout approach, a rule is expressed as a span
$L\xleftarrow{l} K \xrightarrow{r} R$, where $l$ specifies what should
be removed and $r$ what should be added. Obviously, we have to remove
the content of $y$, and nothing else. This means that
$K=\grfib{$u$}{}$.

But then there is no morphism from $K$ to $R$, hence if we use a span
to express $y\coloneq x+y$ we have to take $\grfib{$u$}{\upv}$ as
right hand side. But this means that the value of $x$ cannot change
and therefore that $x\coloneq y$ cannot be applied simultaneously. We
therefore need a way to express the lack of effect on $x$ in a
weaker sense than as the lack of change (the preservation) of $x$'s
content. The morphism $r$ should add the content $u+v$ to $y$, and say
nothing of $x$'s content. Hence $r$ should match an intermediate graph
$I = \grfibu{}$ into $R=\grfibu{\upv}$. And to make sure that $I$ and
$R$ both match to placeholder $y$, we also need a morphism $i$ from
$I$ to $K$, that maps $I$'s node to $K$'s sink, which stands for
$y$. This leads to the following rule, where $i$ is specified by a
dotted arrow:
\begin{align*}
  (y\coloneq x+y) &&
  \raisebox{-4mm}{\begin{tikzpicture}[xscale=3, remember picture]
    \node (L) at (0,0) {\grfib{$u$}{$v$}};
    \node (K) at (1,0) {\begin{tikzpicture}[xscale=0.5, remember picture]
  \node [draw, circle, font=\footnotesize, minimum width=7mm] (x2) at (-1,0) {\raisebox{-1ex}{\smash{$u$}}};
  \node [draw, circle, font=\footnotesize, minimum width=7mm] (y2) at (1,0) {};
  \path[->] (x2) edge (y2);
\end{tikzpicture}};
    \node (I) at (1.9,0) {\begin{tikzpicture}[remember picture]
  \node [draw, circle, font=\footnotesize, minimum width=7mm] (x1) at (0,0) {};
\end{tikzpicture}};
    \node (R) at (2.6,0) {\grfibu{\upv}};
    \path[->] (K) edge node[above, font=\footnotesize] {$l$} (L);
    \path[->] (I) edge node[above, font=\footnotesize] {$i$} (K);
    \draw [overlay, ->, dotted] (x1) to[bend right = 70] (y2);
    \path[->] (I) edge node[above, font=\footnotesize] {$r$} (R);
  \end{tikzpicture}}
\end{align*}
We thus see that the part of $K$ that is not matched by $I$, which we
can informally describe as $K\setminus i(I)$, is not modified by this
rule but can still be modified by another rule, while the part of $K$
that is matched by $I$, i.e., node $y$, is here required to be
preserved and therefore cannot be removed by another rule.

Similarly, the rule corresponding to the expression $x\coloneq y$
should be
\begin{align*}
  (x\coloneq y) &&
  \raisebox{-4mm}{\begin{tikzpicture}[xscale=3, remember picture]
    \node (L) at (0,0) {\grfib{$u$}{$v$}};
    \node (K) at (1,0) {\begin{tikzpicture}[xscale=0.5, remember picture]
  \node [draw, circle, font=\footnotesize, minimum width=7mm] (x4) at (-1,0) {};
  \node [draw, circle, font=\footnotesize, minimum width=7mm] (y4) at (1,0) {\raisebox{-1ex}{\smash{$v$}}};
  \path[->] (x4) edge (y4);
\end{tikzpicture}};
    \node (I) at (1.9,0) {\begin{tikzpicture}[remember picture]
  \node [draw, circle, font=\footnotesize, minimum width=7mm] (x3) at (0,0) {};
\end{tikzpicture}};
    \node (R) at (2.6,0) {\grfibu{$v$}};
    \path[->] (K) edge node[above, font=\footnotesize] {$l$} (L);
    \path[->] (I) edge node[above, font=\footnotesize] {$i$} (K);
    \draw [overlay, ->, dotted] (x3) to[bend right = 70] (x4);
    \path[->] (I) edge node[above, font=\footnotesize] {$r$} (R);
  \end{tikzpicture}}
\end{align*}
where this time $i$ maps its domain's node to its codomain source,
which stands for $x$.

We can now venture a general definition, assuming a suitable category
$\aCat$.

\begin{definition}\label{def-weakspan}
  A \emph{\weakspan} $\arule$ is a diagram
  $L\xleftarrow{l} K \xleftarrow{i} I \xrightarrow{r} R$ in
  $\aCat$. Given an object $G$ of $\aCat$ and a \weakspan\ $\arule$, a
  \emph{direct transformation $\adt$ of $G$ by $\arule$} is a diagram
  (also called \emph{Weak Double-Pushout})
\begin{center}
  \begin{tikzpicture}[xscale=1.8, yscale=1.5]
    \node (L) at (0,1) {$L$}; \node (K) at (1,1) {$K$};  \node (I) at
    (2,1) {$I$};  \node (R) at (3,1) {$R$}; \node (G) at (0.5,0) {$G$};
    \node (D) at (1.5,0) {$D$}; \node (H) at (2.5,0) {$H$};
    \node at (0.75,0.5) {\PO}; \node at (2.25,0.5) {\PO};
    \path[->] (K) edge node[fill=white, font=\footnotesize] {$l$} (L);
    \path[->] (I) edge node[fill=white, font=\footnotesize] {$i$} (K);
    \path[->] (I) edge node[fill=white, font=\footnotesize] {$r$} (R);
    \path[->] (L) edge node[fill=white, font=\footnotesize] {$m$} (G);
    \path[->] (K) edge node[fill=white, font=\footnotesize] {$k$} (D);
    \path[->] (D) edge node[fill=white, font=\footnotesize] {$f$} (G);
    \path[->] (I) edge node[fill=white, font=\footnotesize] {$k\circ i$} (D);
    \path[->] (D) edge node[fill=white, font=\footnotesize] {$g$} (H);
    \path[->] (R) edge node[fill=white, font=\footnotesize] {$n$} (H);
  \end{tikzpicture}
\end{center}
such that $\tuple{G,f,m}$ is a pushout over $\tuple{l,k}$ and
$\tuple{H,g,n}$ is a pushout over $\tuple{r,k\circ i}$; we then write
$G\trans{\ \adt\ } H$. Let
$\dirtrans{G}{\arule}$ be the set of all direct transformations of $G$
by $\arule$. For a set $\R$ of {\weakspan}s, let
$\dirtrans{G}{\R}\defeq \biguplus_{\arule\in\R}\dirtrans{G}{\arule}$.
\end{definition}
As $\arule$ is part of any diagram $\adt\in\dirtrans{G}{\arule}$, it
is obvious that
$\dirtrans{G}{\arule}\cap \dirtrans{G}{\arule'} = \ensvide$ whenever
$\arule\neq \arule'$. A span is of course a weak span where $I=K$ and
$i=\id{K}$, and in this case a Weak Double-Pushout is a standard
Double-Pushout diagram.

In the rest of the paper, when we refer to some weak span $\arule$,
possibly indexed by a natural number, we will also assume the objects
and morphisms $L$, $K$, $I$, $R$, $l$, $i$ and $r$, indexed by the
same number, as given in the definition of weak spans. The same scheme
will be used for direct transformations and indeed for all diagrams
given in future definitions.

\section{Parallel Coherent Transformations}\label{sec-coherence}

If we assume direct transformations $\adt_1$ of $G=\grfib{1}{2}$ by
$(x\coloneq y)$ and $\adt_2$ of $G$ by $(y\coloneq x+y)$ as in
Figure~\ref{fig-adt12}, we may then refer to the objects and morphisms
involved as $I_1$, $I_2$, $D_1$, $D_2$, $i_1$, $i_2$, etc. As stated
above, the node that is matched by $I_2$, i.e., node $y$, cannot be
removed by another rule, hence must belong to $D_1$. A parallel
transformation is not possible without this condition. This means that
there must be a morphism $j_2:I_2\rightarrow D_1$ that maps $I_2$'s
node to the sink in $D_1$ (the short dashed arrow in
Figure~\ref{fig-adt12}). Symmetrically, node $x$ matched by $I_1$ must
belong to $D_2$ and there must be a morphism $j_1:I_1\rightarrow D_2$
that maps $I_1$'s node to the source in $D_2$ (the long dashed
arrow). This leads to the following definition.

\begin{figure}[t]
  \centering
  \raisebox{9mm}{$(\adt_1)$}
  \begin{tikzpicture}[xscale=3, yscale=1.5, remember picture]
    \node (L) at (0,1) {$\grfib{$u$}{$v$}$};
    \node (K) at (1,1) {\begin{tikzpicture}[xscale=0.5, remember picture]
  \node [draw, circle, font=\footnotesize, minimum width=7mm] (xK1) at (-1,0) {};
  \node [draw, circle, font=\footnotesize, minimum width=7mm] (yK1) at (1,0) {\raisebox{-1ex}{\smash{$v$}}};
  \path[->] (xK1) edge (yK1);
\end{tikzpicture}};
    \node (I) at (2,1) {\begin{tikzpicture}[remember picture]
  \node [draw, circle, font=\footnotesize, minimum width=7mm] (xI1) at (0,0) {};
\end{tikzpicture}};
    \node (R) at (3,1) {\begin{tikzpicture}[remember picture]
  \node [draw, circle, font=\footnotesize, minimum width=7mm] (xR1) at (0,0) {\raisebox{-1ex}{\smash{$v$}}};
\end{tikzpicture}};
    \node (G) at (0.5,0) {$\grfib{1}{2}$};
    \node (D) at (1.5,0) {\begin{tikzpicture}[xscale=0.5, remember picture]
  \node [draw, circle, font=\footnotesize, minimum width=7mm] (xD1) at (-1,0) {};
  \node [draw, circle, font=\footnotesize, minimum width=7mm] (yD1) at (1,0) {\raisebox{-1ex}{\smash{2}}};
  \path[->] (xD1) edge (yD1);
\end{tikzpicture}};
    \node (H) at (2.5,0) {\begin{tikzpicture}[xscale=0.5, remember picture]
  \node [draw, circle, font=\footnotesize, minimum width=7mm] (xH1) at (-1,0) {\raisebox{-1ex}{\smash{2}}};
  \node [draw, circle, font=\footnotesize, minimum width=7mm] (yH1) at (1,0) {\raisebox{-1ex}{\smash{2}}};
  \path[->] (xH1) edge (yH1);
\end{tikzpicture}};
    \path[->] (K) edge (L);
    \path[->] (I) edge (K);
    \draw [overlay, ->, dotted] (xI1) to[bend right = 50] (xK1);
    \path[->] (I) edge (R);
    \path[->] (L) edge (G);
    \path[->] (K) edge (D);
    \path[->] (I) edge (D);
    \draw [overlay, ->, dotted] (xI1) to[bend right = 10] (xD1);
    \path[->] (R) edge (H);
    \draw [overlay, ->, dotted] (xR1) to[bend right = 10] (xH1);
    \path[->] (D) edge (G);
    \path[->] (D) edge (H);
  \end{tikzpicture}\\
  \raisebox{11mm}{$(\adt_2)$}
  \begin{tikzpicture}[xscale=3, yscale=1.5, remember picture]
    \node (L) at (0,1) {$\grfib{$u$}{$v$}$};
    \node (K) at (1,1) {\begin{tikzpicture}[xscale=0.5, remember picture]
  \node [draw, circle, font=\footnotesize, minimum width=7mm] (xK2) at (-1,0) {\raisebox{-1ex}{\smash{$u$}}};
  \node [draw, circle, font=\footnotesize, minimum width=7mm] (yK2) at (1,0) {};
  \path[->] (xK2) edge (yK2);
\end{tikzpicture}};
    \node (I) at (2,1) {\begin{tikzpicture}[remember picture]
  \node [draw, circle, font=\footnotesize, minimum width=7mm] (yI2) at (0,0) {};
\end{tikzpicture}};
    \node (R) at (3,1) {\begin{tikzpicture}[remember picture]
  \node [draw, circle, font=\footnotesize, minimum width=7mm] (yR2) at (0,0) {\raisebox{-1ex}{\smash{\upv}}};
\end{tikzpicture}};
    \node (G) at (0.5,0) {$\grfib{1}{2}$};
    \node (D) at (1.5,0) {\begin{tikzpicture}[xscale=0.5, remember picture]
  \node [draw, circle, font=\footnotesize, minimum width=7mm] (xD2) at (-1,0) {\raisebox{-1ex}{\smash{1}}};
  \node [draw, circle, font=\footnotesize, minimum width=7mm] (yD2) at (1,0) {};
  \path[->] (xD2) edge (yD2);
\end{tikzpicture}};
    \node (H) at (2.5,0) {\begin{tikzpicture}[xscale=0.5, remember picture]
  \node [draw, circle, font=\footnotesize, minimum width=7mm] (xH2) at (-1,0) {\raisebox{-1ex}{\smash{1}}};
  \node [draw, circle, font=\footnotesize, minimum width=7mm] (yH2) at (1,0) {\raisebox{-1ex}{\smash{3}}};
  \path[->] (xH2) edge (yH2);
\end{tikzpicture}};
    \path[->] (K) edge (L);
    \path[->] (I) edge (K);
    \draw [overlay, ->, dotted] (yI2) to[bend right = 50] (yK2);
    \path[->] (I) edge (R);
    \path[->] (L) edge (G);
    \path[->] (K) edge (D);
    \path[->] (I) edge (D);
    \draw [overlay, ->, dotted] (yI2) to[bend left = 10] (yD2);
    \path[->] (R) edge (H);
    \draw [overlay, ->, dotted] (yR2) to[bend left = 10] (yH2);
    \path[->] (D) edge (G);
    \path[->] (D) edge (H);
    \draw [overlay, ->, dashed] (yI2) to[bend right = 10] (yD1);
    \draw [overlay, -, draw=white, line width=3pt] (xI1) to[bend left = 10] (xD2);
    \draw [overlay, ->, dashed] (xI1) to[bend left = 10] (xD2);
  \end{tikzpicture}
  \caption{The direct transformations $\adt_1$ and $\adt_2$}
  \label{fig-adt12}
\end{figure}

\begin{definition}\label{def-coherent}
  Given an object $G$ of $\aCat$ and two {\weakspan}s $\arule_1$ and
  $\arule_2$, we say that a pair of
  direct transformations $\adt_1\in \dirtrans{G}{\arule_1}$ and
  $\adt_2\in \dirtrans{G}{\arule_2}$ is \emph{parallel coherent} if
  there exist two morphisms $j_1: I_1\rightarrow D_2$ and $j_2: I_2\rightarrow
  D_1$ such that the diagram
  \begin{center}
  \begin{tikzpicture}[xscale=1.65, yscale=1.8]
    \node (L) at (0.5,1) {$L_2$}; \node (K) at (1.5,1) {$K_2$};  \node (I) at
    (2.5,1) {$I_2$};  \node (R) at (3.5,1) {$R_2$}; \node (G) at (0,0) {$G$};
    \node (D) at (2,0) {$D_2$}; \node (H) at (3,0) {$H_2$};
    \path[->] (K) edge node[fill=white, font=\footnotesize] {$l_2$} (L);
    \path[->] (I) edge node[fill=white, font=\footnotesize] {$i_2$} (K);
    \path[->] (I) edge node[fill=white, font=\footnotesize] {$r_2$} (R);
    \path[->] (L) edge node[fill=white, font=\footnotesize, near start] {$m_2$} (G);
    \path[->] (K) edge node[fill=white, font=\footnotesize] {$k_2$} (D);
    \path[->] (D) edge node[fill=white, font=\footnotesize] {$f_2$} (G);
    \path[->] (I) edge node[fill=white, font=\footnotesize] {$k_2\circ i_2$} (D);
    \path[->] (D) edge node[fill=white, font=\footnotesize] {$g_2$} (H);
    \path[->] (R) edge node[fill=white, font=\footnotesize] {$n_2$} (H);
    \node (L1) at (-0.5,1) {$L_1$}; \node (K1) at (-1.5,1) {$K_1$};  \node (I1) at
    (-2.5,1) {$I_1$};  \node (R1) at (-3.5,1) {$R_1$};
    \node (D1) at (-2,0) {$D_1$}; \node (H1) at (-3,0) {$H_1$};
    \path[->] (K1) edge node[fill=white, font=\footnotesize] {$l_1$} (L1);
    \path[->] (I1) edge node[fill=white, font=\footnotesize] {$i_1$} (K1);
    \path[->] (I1) edge node[fill=white, font=\footnotesize] {$r_1$} (R1);
    \path[->] (L1) edge node[fill=white, font=\footnotesize, near start] {$m_1$} (G);
    \path[->] (K1) edge node[fill=white, font=\footnotesize] {$k_1$} (D1);
    \path[->] (D1) edge node[fill=white, font=\footnotesize] {$f_1$} (G);
    \path[->] (I1) edge node[fill=white, font=\footnotesize] {$k_1\circ i_1$} (D1);
    \path[->] (D1) edge node[fill=white, font=\footnotesize] {$g_1$} (H1);
    \path[->] (R1) edge node[fill=white, font=\footnotesize] {$n_1$} (H1);
  \path[-] (I1) edge[draw=white, line width=3pt]  (D);
  \path[-] (I) edge[draw=white, line width=3pt]  (D1);
    \path[->,dashed] (I1) edge node[fill=white, font=\footnotesize,
    near start] {$j_1$} (D);
    \path[->,dashed] (I) edge node[fill=white, font=\footnotesize,
    near start] {$j_2$} (D1);
  \end{tikzpicture}    
  \end{center}
  commutes, i.e., $f_2\circ j_1 = f_1\circ k_1\circ i_1$ and $f_1\circ
  j_2 = f_2\circ k_2\circ i_2$. A \emph{parallel coherent set} is a
  subset $\Gamma\subseteq \dirtrans{G}{\R}$ such that every pair
  $\adt$, $\adt'$ of elements of $\Gamma$ is parallel coherent.
\end{definition}

Note that for any $\adt\in\dirtrans{G}{\R}$, the pair $\adt, \adt$
is parallel coherent (with $j=k\circ i$) or, equivalently, that any
singleton $\set{\adt}\subseteq \dirtrans{G}{\R}$ is a parallel coherent set.

\begin{lemma}\label{lm-exist-j}
  For all integer $p\geq 1$ and parallel coherent set
  $\set{\adt_1,\dotsc,\adt_p}\subseteq \dirtrans{G}{\R}$, there exist
  morphisms $j_a^b: I_a\rightarrow D_b$ for all integers $1\leq
  a,b\leq p$ such that the following diagram commutes for all integer
  $1\leq c\leq p$.
  \begin{center}
    \begin{tikzpicture} [xscale=2, yscale=1]
      \node (G) at (0,0) {$G$};
      \node (D1) at (1,1) {$D_1$};
      \node (Dn) at (1,-1) {$D_p$};
      \node (Ic) at (2,0) {$I_c$};
      \node at (1,0.1) {$\vdots$};
      \path[->] (D1) edge node[fill=white, font=\footnotesize] {$f_1$} (G);
      \path[->] (Dn) edge node[fill=white, font=\footnotesize] {$f_p$} (G);
      \path[->] (Ic) edge node[fill=white, font=\footnotesize] {$j_c^1$} (D1);
      \path[->] (Ic) edge node[fill=white, font=\footnotesize] {$j_c^p$} (Dn);
    \end{tikzpicture}
  \end{center}
\end{lemma}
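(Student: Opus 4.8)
The plan is to reduce the claim to the pairwise condition of Definition~\ref{def-coherent} and to observe that it already assembles, with no further work, into the required cone. Concretely, the displayed diagram commutes exactly when $f_b\circ j_c^b = f_{b'}\circ j_c^{b'}$ for all $b,b'$ in $\{1,\dots,p\}$, so it suffices to produce the morphisms $j_a^b$ in such a way that each composite $f_b\circ j_c^b$ equals one fixed morphism $I_c\to G$ independent of $b$ --- the natural candidate being $f_c\circ k_c\circ i_c$.

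I would define the $j_a^b$ as follows. On the diagonal, set $j_a^a \defeq k_a\circ i_a\colon I_a\to D_a$; this is the choice already noted (right after Definition~\ref{def-coherent}) witnessing that the pair $\adt_a,\adt_a$ is parallel coherent, and it trivially satisfies $f_a\circ j_a^a = f_a\circ k_a\circ i_a$. For $a\neq b$, since $\{\adt_a,\adt_b\}$ is a pair of elements of the parallel coherent set $\{\adt_1,\dots,\adt_p\}$, the pair $\adt_a,\adt_b$ is parallel coherent; applying Definition~\ref{def-coherent} with $\adt_a$ in the role of $\adt_1$ and $\adt_b$ in the role of $\adt_2$ yields in particular a morphism $j_a^b\colon I_a\to D_b$ with $f_b\circ j_a^b = f_a\circ k_a\circ i_a$. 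Doing this once for each ordered pair $(a,b)$ with $a\neq b$ produces the whole family $\{j_a^b\}_{1\leq a,b\leq p}$; the various invocations are entirely independent, since the morphisms they produce have pairwise distinct domains and codomains, so no compatibility between the choices needs to be arranged.

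It then remains to verify commutativity of the diagram for a fixed $c$. For $b=c$ we have $f_c\circ j_c^c = f_c\circ k_c\circ i_c$ by the diagonal definition, and for $b\neq c$ we have $f_b\circ j_c^b = f_c\circ k_c\circ i_c$ by the equation extracted above; hence $f_b\circ j_c^b$ is the single morphism $f_c\circ k_c\circ i_c$ for every $b$, and in particular $f_b\circ j_c^b = f_{b'}\circ j_c^{b'}$ for all $b,b'$. I do not expect any genuine obstacle: the only point requiring care is the bookkeeping of which transformation plays the ``left'' role when invoking Definition~\ref{def-coherent} for an ordered pair $(a,b)$ --- equivalently, noting that parallel coherence of a pair is symmetric (swap $j_1$ and $j_2$) --- after which the lemma is essentially the remark that pairwise coherence already pins $f_b\circ j_c^b$ down to something not depending on $b$.
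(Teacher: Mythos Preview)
Your argument is correct and is essentially the same as the paper's, which dispatches the lemma with a one-line ``easy induction on $p$''; your direct construction simply unrolls that induction by invoking Definition~\ref{def-coherent} once per ordered pair $(a,b)$ and taking $j_a^a=k_a\circ i_a$ on the diagonal. The only difference is presentational, not mathematical.
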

\begin{proof}
  By an easy induction on $p$. 
\end{proof}

We can now consider the parallel transformation of an object by
parallel coherent direct transformations. The principle of the
transformation is simply that anything that is removed by some
direct transformation should be removed in the parallel
transformation, and anything that is added by some direct
transformation should be added to the result.

\begin{definition}\label{def-transfo}
  For any object $G$ of $\aCat$ and $\Gamma= \set{\adt_1,\dotsc,\adt_p}\subseteq \dirtrans{G}{\R}$ a finite
  parallel coherent set, with integer $p\geq 1$, a
  \emph{parallel coherent transformation of $G$ by $\Gamma$} is a
  diagram as in Figure \ref{fig-pct} where: 
  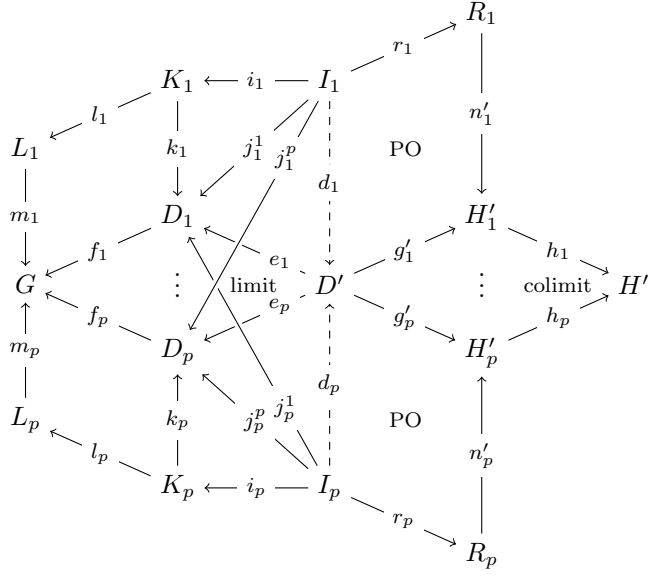
\begin{figure}[t]
    \centering
      \begin{tikzpicture}[xscale=2, yscale=0.9]
  \node (G) at (0,0) {$G$};
  \node (D) at (2,0) {$D'$};
  \node (H) at (4,0) {$H'$};
  \node at (1,0.1) {$\vdots$};
  \node at (3,0.1) {$\vdots$};
  \node (L1) at (0,2) {$L_1$};
  \node (K1) at (1,3) {$K_1$};
  \node (D1) at (1,1) {$D_1$};
  \node (I1) at (2,3) {$I_1$};
  \node (R1) at (3,4) {$R_1$};
  \node (H1) at (3,1) {$H'_1$};
  \node (Ln) at (0,-2) {$L_p$};
  \node (Kn) at (1,-3) {$K_p$};
  \node (Dn) at (1,-1) {$D_p$};
  \node (In) at (2,-3) {$I_p$};
  \node (Rn) at (3,-4) {$R_p$};
  \node (Hn) at (3,-1) {$H'_p$};
  \node at (2.5,2){\PO}; \node at (2.5,-2) {\PO};
  \path[->] (K1) edge node[fill=white, font=\footnotesize] {$l_1$} (L1) ;
  \path[->] (L1) edge node[fill=white, font=\footnotesize] {$m_1$} (G);
  \path[->] (K1) edge node[fill=white, font=\footnotesize] {$k_1$} (D1);
  \path[->] (D1) edge node[fill=white, font=\footnotesize] {$f_1$} (G);
  \path[->] (I1) edge node[fill=white, font=\footnotesize] {$i_1$} (K1);
  \path[->] (I1) edge node[fill=white, font=\footnotesize] {$r_1$} (R1);
  \path[->] (R1) edge node[fill=white, font=\footnotesize] {$n'_1$} (H1);
  \path[->] (D) edge node[fill=white, font=\footnotesize, near start] {$e_1$} (D1);
  \path[->] (D) edge node[fill=white, font=\footnotesize] {$g'_1$} (H1);
  \path[->] (H1) edge node[fill=white, font=\footnotesize] {$h_1$} (H);
  \path[->,dashed] (I1) edge node[fill=white, font=\footnotesize] {$d_1$} (D);
  \path[->] (Kn) edge node[fill=white, font=\footnotesize] {$l_p$} (Ln) ;
  \path[->] (Ln) edge node[fill=white, font=\footnotesize] {$m_p$} (G);
  \path[->] (Kn) edge node[fill=white, font=\footnotesize] {$k_p$} (Dn);
  \path[->] (Dn) edge node[fill=white, font=\footnotesize] {$f_p$} (G);
  \path[->] (In) edge node[fill=white, font=\footnotesize] {$i_p$} (Kn);
  \path[->] (In) edge node[fill=white, font=\footnotesize] {$r_p$} (Rn);
  \path[->] (Rn) edge node[fill=white, font=\footnotesize] {$n'_p$} (Hn);
  \path[->] (D) edge node[fill=white, font=\footnotesize, near start] {$e_p$} (Dn);
  \path[->] (D) edge node[fill=white, font=\footnotesize] {$g'_p$} (Hn);
  \path[->] (Hn) edge node[fill=white, font=\footnotesize] {$h_p$} (H);
  \path[->,dashed] (In) edge node[fill=white, font=\footnotesize] {$d_p$} (D);
  \path[->] (I1) edge node[fill=white, font=\footnotesize] {$j^1_1$} (D1);
  \path[->] (In) edge node[fill=white, font=\footnotesize] {$j^p_p$} (Dn);
  \path[-] (In) edge[draw=white, line width=3pt]  (D1);
  \path[->] (In) edge node[fill=white, font=\footnotesize, near start] {$j_p^1$} (D1);
  \path[-] (I1) edge[draw=white, line width=3pt]  (Dn);
  \path[->] (I1) edge node[fill=white, font=\footnotesize, near start] {$j^p_1$} (Dn);
  \node at (1.5,0){\footnotesize limit}; \node at (3.5,0) {\footnotesize colimit}; 
\end{tikzpicture}
\caption{A parallel coherent transformation}
\label{fig-pct}
\end{figure}

\begin{itemize}
\item $\tuple{D', e_1,\dotsc,e_p}$ is a limit over
  $\tuple{f_1,\dotsc,f_p}$,
\item for all $1\leq c\leq p$, $d_c:I_c\rightarrow D'$ is the unique
  morphism such that for all $1\leq a\leq p$, $j_c^{a} = e_{a}\circ d_c$,
\item  for all $1\leq a\leq p$, $\tuple{H'_a,g'_a,n'_a}$ is a pushout over
$\tuple{r_a,d_a}$,
\item $\tuple{H', h_1,\dotsc,h_p}$ is a colimit over $\tuple{g'_1,\dotsc,g'_p}$.
\end{itemize}
If such a diagram exists we write $G\trans{\ \Gamma\ } H$.
\end{definition}
Note that the existence of this diagram only depends on the existence
of the limit $D'$, the pushouts $H'_a$'s and the colimit $H'$; the
existence and commuting properties of the arrows $j_a^b$ is
ascertained by Lemma~\ref{lm-exist-j}, hence the existence of the
$d_c$'s if $D'$ exists. It is also important to notice that if the
left pushouts of the direct transformations $\adt_a$'s are preserved,
this is not the case of their right pushouts $H_a$'s. In this sense
the result of the parallel coherent transformation is disconnected
from the results of the input direct transformations.

\section{Comparison with Parallel Independence}\label{sec-comparison}

In this section we assume a class of monomorphisms $\M$ of $\aCat$
that confers $\tuple{\aCat,\M}$ a structure of weak adhesive HLR
category. We do not give here the rather long definition of this
concept, which can be found in \cite{EhrigEPT06}. 

In the results below we use the following
properties of weak adhesive HLR categories (see \cite{EhrigEPT06}).
\begin{enumerate}
\item $\aCat$ has pushouts and pullbacks along $\M$-morphisms, and $\M$
  is closed under pushouts and pullbacks, i.e., if $\tuple{A, f',g'}$
  is a pushout or a pullback over $\tuple{f,g}$ and $f\in\M$ then
  $f'\in \M$.
\item Every pushout along a $\M$-morphism is a pullback.
\item The $\M$-POPB decomposition lemma: in the diagram
  \begin{center}
        \begin{tikzpicture}[scale=1.5]
      \node (G) at (0,0) {$C$};
      \node (L) at (0,1) {$A$};
      \node (K) at (1,1) {$B$};
      \node (D) at (1,0) {$D$};
      \node (RK) at (2,1) {$E$};
      \node (H) at (2,0) {$F$}; 
      \path[<-] (K) edge node[fill=white, font=\footnotesize] {$u$} (L);
      \path[->] (L) edge node[fill=white, font=\footnotesize] {$v$} (G);
      \path[->] (K) edge (D);
      \path[<-] (D) edge (G);
      \path[->] (D) edge node[fill=white, font=\footnotesize] {$w$} (H);
      \path[->] (K) edge (RK);
      \path[->] (RK) edge (H);
    \end{tikzpicture}
  \end{center}
  if the outer square is a pushout, the right square a pullback,
  $w\in\M$ and ($u\in\M$ or $v\in\M$), then the left and right squares
  are both pushouts and pullbacks.
\end{enumerate}

It is easy to see that a parallel coherent transformation of an object
$G$ by a singleton $\set{\adt}$, for any $\adt\in\dirtrans{G}{\arule}$,
is the same thing as the direct transformation $\adt$, i.e.,
$G\trans{\set{\adt}} H$ iff $G\trans{\adt} H$. That is, if $p=1$ then
$D'=D_1$. Furthermore, this transformation yields exactly the same
result $H$ as a direct transformation of the span defined below.

\begin{definition}\label{def-assoc}
  A \emph{$\M$-weak span} is a week span whose morphisms $l,i,r$
  belong to $\M$.
  Let $\arule$ be a $\M$-weak span
  $L\xleftarrow{l} K \xleftarrow{i} I \xrightarrow{r} R$, the
  \emph{associated span $\assoc{\arule}$ of $\arule$} is the
  diagram $L\xleftarrow{l} K \xrightarrow{r'} R'$ where
  $\tuple{R', i',r'}$ is a pushout over $\tuple{i,r}$. 
\end{definition}

\begin{lemma}\label{lm-assocspan}
  For all objects $G, H$ of $\aCat$ and $\M$-weak span $\arule$, we have
  \[\exists \adt\in\dirtrans{G}{\arule} \text{ s.t. } G \trans{\adt}
    H \text{ iff }  \exists \anassocdt\in\dirtrans{G}{\assoc{\arule}}
    \text{ s.t. } G \trans{\anassocdt}
    H.\] 
\end{lemma}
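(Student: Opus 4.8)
The plan is to exploit the fact that a direct transformation of $G$ by $\arule$ and a direct transformation of $G$ by $\assoc{\arule}$ necessarily share the very same left‑hand pushout: the square over $\tuple{l,k}$ with apex $\tuple{G,f,m}$ involves only $L$, $K$, $l$, $k$, all of which are common to $\arule$ and to $\assoc{\arule}$. Hence the two notions of direct transformation can differ only in their right‑hand squares, and the whole equivalence reduces to a single instance of the pasting law for pushouts.

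First I would set up the comparison diagram obtained by juxtaposing, on the left, the pushout square defining $\tuple{R',i',r'}$ over $\tuple{i,r}$ (with $i\colon I\to K$, $r\colon I\to R$, $r'\colon K\to R'$, $i'\colon R\to R'$ and $r'\circ i = i'\circ r$) and, on the right, a square whose top edge is $k\colon K\to D$, whose left edge is $r'\colon K\to R'$, and whose two remaining edges $D\to H$ and $R'\to H$ land on a common apex $H$. The outer rectangle of this diagram then has top edge $k\circ i$ and bottom edge the composite $R\xrightarrow{i'}R'\to H$. The only tool needed is the elementary pasting law, valid in any category: when the left square is a pushout, the right square is a pushout if and only if the outer rectangle is.

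For the direction from $\arule$ to $\assoc{\arule}$: given $\adt\in\dirtrans{G}{\arule}$ with $G\trans{\adt}H$, the right pushout $\tuple{H,g,n}$ over $\tuple{r,k\circ i}$ supplies the equation $n\circ r = g\circ k\circ i$, so the universal property of the pushout $\tuple{R',i',r'}$ yields a unique $n'\colon R'\to H$ with $n'\circ r' = g\circ k$ and $n'\circ i' = n$. Setting $D\to H$ to be $g$ and $R'\to H$ to be $n'$ makes the comparison diagram commute, its outer rectangle being exactly the right pushout of $\adt$; by the pasting law the right square, namely $\tuple{H,g,n'}$, is a pushout over $\tuple{r',k}$. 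Together with the shared left pushout this is a direct transformation $\anassocdt\in\dirtrans{G}{\assoc{\arule}}$ with $G\trans{\anassocdt}H$. Conversely, given $\anassocdt\in\dirtrans{G}{\assoc{\arule}}$ with $G\trans{\anassocdt}H$, its right pushout $\tuple{H,g',n'}$ over $\tuple{r',k}$ is taken as the right square; pasting it onto the pushout $R'$ shows the outer rectangle $\tuple{H,g',n'\circ i'}$ is a pushout over $\tuple{r,k\circ i}$, which with the shared left pushout produces a $\adt\in\dirtrans{G}{\arule}$ with $G\trans{\adt}H$.

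The hypothesis that $\arule$ is an $\M$‑weak span is used only to guarantee that the pushout $\tuple{R',i',r'}$ — hence $\assoc{\arule}$ itself — exists; the pasting law requires no adhesivity, and no use of the $\M$‑POPB decomposition lemma is needed here. The step I expect to require the most care is the bookkeeping of the cocone legs: checking that the morphisms read off the outer rectangle of the paste diagram are literally the $g$ and $n$ (respectively $g'$ and $n'$) of a Weak Double‑Pushout in the sense of Definition~\ref{def-weakspan}, and not merely a pushout cocone isomorphic to it, so that the coincidence of the resulting object $H$ on the two sides holds on the nose rather than up to isomorphism.
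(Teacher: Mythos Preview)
Your proposal is correct and essentially identical to the paper's own proof: the paper also reduces both directions to the pasting law for pushouts (which it calls the ``pushout decomposition lemma'' and ``pushout composition lemma'') applied to the square $\tuple{R',i',r'}$ and the square $\tuple{H,g,n'}$ over $\tuple{r',k}$, after first extracting $n'$ from the universal property of $R'$ exactly as you do. Your remarks on the role of the $\M$-hypothesis and on the on-the-nose identification of $H$ are accurate refinements that the paper leaves implicit.
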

\begin{proof}
Only if part. Assume that $\tuple{R',r',i'}$ is a pushout over
$\tuple{r,i}$ and $\tuple{H,g,n}$ is a pushout over $\tuple{r,k\circ
  i}$, then $n\circ r = g\circ k\circ i$, hence there is a unique
morphism $n': RK\rightarrow H$ such that $n'\circ i'=n$ and $n'\circ
r'=g\circ k$. By the pushout decomposition lemma $\tuple{H,g,n'}$ is a
pushout over $\tuple{r',k}$.
  \begin{center}
      \begin{tikzpicture}[scale=1.5]
        \node (G) at (0,0) {$G$};
        \node (L) at (0,1) {$L$};
        \node (K) at (1,1) {$K$};
        \node (D) at (1,0) {$D$};
        \node (I) at (1,2) {$I$};
        \node (R) at (2,2) {$R$};
        \node (RK) at (2,1) {$R'$};
        \node (H) at (2,0) {$H$}; 
        \path[->] (K) edge node[fill=white, font=\footnotesize] {$l$} (L);
        \path[->] (I) edge node[fill=white, font=\footnotesize] {$i$} (K);
        \path[->] (I) edge node[fill=white, font=\footnotesize] {$r$} (R);
        \path[->] (L) edge node[fill=white, font=\footnotesize] {$m$} (G);
        \path[->] (K) edge node[fill=white, font=\footnotesize] {$k$} (D);
        \path[->] (D) edge node[fill=white, font=\footnotesize] {$f$} (G);
        \path[->] (D) edge node[fill=white, font=\footnotesize] {$g$} (H);
        \path[->] (R) edge [bend left] node[fill=white, font=\footnotesize] {$n$} (H);
        \path[->] (R) edge node[fill=white, font=\footnotesize] {$i'$} (RK);
        \path[->] (K) edge node[fill=white, font=\footnotesize] {$r'$} (RK);
        \path[->] (RK) edge [dashed] node[fill=white, font=\footnotesize] {$n'$} (H);
      \end{tikzpicture}
  \end{center}
If part. Assume that $\tuple{R',r',i'}$ is a pushout over
$\tuple{r,i}$ and $\tuple{H,g,n'}$ is a pushout over $\tuple{r',k}$
then by the pushout composition lemma $\tuple{H,g, n'\circ i'}$ is a
pushout over $\tuple{r,k\circ i}$.
\end{proof}

Hence obviously the notion of weak span is useless when only one
direct transformation is considered; it has the same expressive power
as standard Double-Pushouts of spans. This lemma also suggests that
weak spans can be analyzed with respect to the properties of their
associated spans, on which a wealth of results is known.

\begin{definition}
  For any $\M$-weak span $\arule$, object $G$ and
  $\adt\in\dirtrans{G}{\arule}$, let
  $\assoc{\adt}\in\dirtrans{G}{\assoc{\arule}}$ be the diagram built
    from $\adt$ in the proof of Lemma \ref{lm-assocspan}.

  Given $\M$-weak span $\arule_1$ and $\arule_2$, an object $G$ of $\aCat$
  and direct transformations $\adt_1\in\dirtrans{G}{\arule_1}$ and
  $\adt_2\in\dirtrans{G}{\arule_2}$, $\adt_1$ and $\adt_2$ are
  \emph{parallel independent} if $\assoc{\adt_1}$ and $\assoc{\adt_2}$
  are parallel independent, i.e., if there exist morphisms
  $j_1:L_1\rightarrow D_2$ and $j_2:L_2\rightarrow D_1$ such that
  $f_2\circ j_2= m_1$ and $f_1\circ j_2=m_2$.

  \begin{center}\begin{tikzpicture}[xscale=1.6, yscale=1.8]
    \node (L) at (0.5,1) {$L_2$}; \node (K) at (1.5,1) {$K_2$};  \node (I) at
    (2.5,1) {$I_2$};  \node (R) at (3.5,1) {$R_2$}; \node (G) at (0,0) {$G$};
    \node (D) at (2,0) {$D_2$}; \node (H) at (3,0) {$H_2$};
    \path[->] (K) edge node[fill=white, font=\footnotesize] {$l_2$} (L);
    \path[->] (I) edge node[fill=white, font=\footnotesize] {$i_2$} (K);
    \path[->] (I) edge node[fill=white, font=\footnotesize] {$r_2$} (R);
    \path[->] (L) edge node[fill=white, font=\footnotesize, near end] {$m_2$} (G);
    \path[->] (K) edge node[fill=white, font=\footnotesize] {$k_2$} (D);
    \path[->] (D) edge node[fill=white, font=\footnotesize] {$f_2$} (G);
    \path[->] (I) edge node[fill=white, font=\footnotesize] {$k_2\circ i_2$} (D);
    \path[->] (D) edge node[fill=white, font=\footnotesize] {$g_2$} (H);
    \path[->] (R) edge node[fill=white, font=\footnotesize] {$n_2$} (H);
    \node (L1) at (-0.5,1) {$L_1$}; \node (K1) at (-1.5,1) {$K_1$};  \node (I1) at
    (-2.5,1) {$I_1$};  \node (R1) at (-3.5,1) {$R_1$};
    \node (D1) at (-2,0) {$D_1$}; \node (H1) at (-3,0) {$H_1$};
    \path[->] (K1) edge node[fill=white, font=\footnotesize] {$l_1$} (L1);
    \path[->] (I1) edge node[fill=white, font=\footnotesize] {$i_1$} (K1);
    \path[->] (I1) edge node[fill=white, font=\footnotesize] {$r_1$} (R1);
    \path[->] (L1) edge node[fill=white, font=\footnotesize, near end] {$m_1$} (G);
    \path[->] (K1) edge node[fill=white, font=\footnotesize] {$k_1$} (D1);
    \path[->] (D1) edge node[fill=white, font=\footnotesize] {$f_1$} (G);
    \path[->] (I1) edge node[fill=white, font=\footnotesize] {$k_1\circ i_1$} (D1);
    \path[->] (D1) edge node[fill=white, font=\footnotesize] {$g_1$} (H1);
    \path[->] (R1) edge node[fill=white, font=\footnotesize] {$n_1$} (H1);
  \path[-] (L1) edge[draw=white, line width=3pt]  (D);
  \path[-] (L) edge[draw=white, line width=3pt]  (D1);
    \path[->,dashed] (L1) edge node[fill=white, font=\footnotesize] {$j_1$} (D);
    \path[->,dashed] (L) edge node[fill=white, font=\footnotesize] {$j_2$} (D1);
  \end{tikzpicture}    
\end{center}
\end{definition}

It is obvious that if $\adt_1\in\dirtrans{G}{\arule_1}$ and
$\adt_2\in\dirtrans{G}{\arule_2}$ are parallel independent then they
are also parallel coherent, and therefore a parallel coherent
transformation $G\trans{\ \set{\adt_1,\adt_2}\ } H'$ is possible. It
is also known that 
(assuming that $\aCat$ has coproducts \emph{compatible with $\M$},
i.e., $f+g\in\M$ whenever $f,g\in\M$) a parallel production rule
$\assoc{\arule_1}+\assoc{\arule_2}$ can be built and hence $G$ be
transformed into a graph $H$ by this rule (see the Parallelism Theorem
in \cite{EhrigEPT06}). We therefore wish to compare $H$ and $H'$.

\begin{definition}
  A \emph{coproduct} of two objects $A_1,A_2$ of $\aCat$ is a cospan
  $A_1\xrightarrow{in_{A_1}} A_1+A_2 \xleftarrow{in_{A_2}} A_2$ such
  that for every cospan $A_1\xrightarrow{f_1} X \xleftarrow{f_2} A_2$
  there exists a unique morphism
  $\addmorph{f_1}{f_2}:A_1+A_2\rightarrow X$ such that
  $\addmorph{f_1}{f_2}\circ in_{A_1}=f_1$ and
  $\addmorph{f_1}{f_2}\circ in_{A_2}=f_2$. For any morphisms
  $g_1:A_1\rightarrow B_1$ and $g_2:A_2\rightarrow B_2$ where
  $B_1,B_2$ have a coproduct, we write $g_1+g_2 \defeq
  \addmorph{in_{B_1}\circ g_1}{in_{B_2}\circ g_2}$.

  A \emph{coproduct} of two weak spans $\arule_1$ and $\arule_2$ is a
  weak span
  \[L_1+L_2\xleftarrow{l_1+l_2} K_1+K_2 \xleftarrow{i_1+i_2} I_1+I_2
  \xrightarrow{r_1+r_2} R_1+R_2,\] denoted $\arule_1+\arule_2$. Then,
  for any object $G$ of $\aCat$ and any direct transformations
  $\adt_1\in\dirtrans{G}{\arule_1}$ and
  $\adt_2\in\dirtrans{G}{\arule_2}$, a \emph{coproduct} of $\adt_1$
  and $\adt_2$ is a diagram
  \begin{center}
    \begin{tikzpicture}[xscale=3.5, yscale=2]
      \node (L) at (0,1) {$L_1+L_2$}; \node (K) at (1,1) {$K_1+K_2$};  \node (I) at
      (2,1) {$I_1+I_2$};  \node (R) at (3,1) {$R_1+R_2$}; \node (G) at (0.5,0) {$G$};
      \node (D) at (1.5,0) {$D$}; \node (H) at (2.5,0) {$H$};
      \node at (0.75,0.5) {\PO}; \node at (2.25,0.5) {\PO};
      \path[->] (K) edge node[fill=white, font=\footnotesize] {$l_1+l_2$} (L);
      \path[->] (I) edge node[fill=white, font=\footnotesize] {$i_1+i_2$} (K);
      \path[->] (I) edge node[fill=white, font=\footnotesize] {$r_1+r_2$} (R);
      \path[->] (L) edge node[fill=white, font=\footnotesize] {$\addmorph{m_1}{m_2}$} (G);
      \path[->] (K) edge node[fill=white, font=\footnotesize] {$k$} (D);
      \path[->] (D) edge node[fill=white, font=\footnotesize] {$f$} (G);
      \path[->] (I) edge node[fill=white, font=\footnotesize] {$k\circ (i_1+i_2)$} (D);
      \path[->] (D) edge node[fill=white, font=\footnotesize] {$g$} (H);
      \path[->] (R) edge node[fill=white, font=\footnotesize] {$n$} (H);
    \end{tikzpicture}
  \end{center}
  that belongs to $\dirtrans{G}{\arule_1+\arule_2}$; it is denoted $\adt_1+\adt_2$.
\end{definition}


In the next result we use the Butterfly Lemma (see \cite
{EhrigEPT06,handbook1}) which states that, given the following two
diagrams (dashed arrows excepted) then
\begin{center}
  \begin{tikzpicture}[scale=1.7]
    \node (E) at (0,0) {$E$}; \node (C) at (-1,1) {$C$};
    \node (A1) at (-1,2) {$A_1$}; \node (A2) at (-2,1) {$A_2$};
    \node (B1) at (0,2) {$B_1$}; \node (B2) at (-2,0) {$B_2$}; 
    \node (D1) at (0,1) {$D_1$}; \node (D2) at (-1,0) {$D_2$};
    \node (A12) at (2,2) {$A_1+A_2$}; \node (B12) at (4,2) {$B_1+B_2$};
    \node (C') at (2,0) {$C$}; \node (E') at (4,0) {$E$}; 
    \node at (-1.5,0.5) {\PO}; \node at (-0.5,1.5) {\PO};
    \path[->] (A1) edge node[fill=white, font=\footnotesize] {$f_1$} (B1);
    \path[->] (A2) edge node[fill=white, font=\footnotesize] {$f_2$} (B2);
    \path[->] (A1) edge node[fill=white, font=\footnotesize] {$a_1$} (C);
    \path[->] (A2) edge node[fill=white, font=\footnotesize] {$a_2$} (C);
    \path[->] (B1) edge[bend left] node[fill=white, font=\footnotesize] {$b_1$} (E);
    \path[->] (B2) edge[bend right] node[fill=white, font=\footnotesize] {$b_2$} (E);
    \path[->] (B1) edge  (D1); \path[->] (B2) edge  (D2);
    \path[->] (C) edge  (D1); \path[->] (C) edge  (D2);
    \path[->,dashed] (D1) edge node[fill=white, font=\footnotesize] {$d_1$} (E);
    \path[->,dashed] (D2) edge node[fill=white, font=\footnotesize] {$d_2$} (E);
    \path[->] (C) edge node[fill=white, font=\footnotesize] {$c$} (E);
    \path[->] (C') edge node[above, font=\footnotesize] {$c$} (E');
    \path[->] (A12) edge node[above, font=\footnotesize] {$f_1+f_2$} (B12);
    \path[->] (A12) edge node[left, font=\footnotesize] {$\addmorph{a_1}{a_2}$} (C');
    \path[->] (B12) edge node[right, font=\footnotesize] {$\addmorph{b_1}{b_2}$} (E');
  \end{tikzpicture}
\end{center}
the right diagram is a pushout iff there exist morphisms $d_1$ and
$d_2$ such that the left diagram commutes and its lower right square
is a pushout.

\begin{theorem}\label{thm-para-indep}
  Let $\arule_1$ and $\arule_2$ be $\M$-weak spans with a coproduct, $G$ and $H'$ be
  objects of $\aCat$, $\adt_1\in\dirtrans{G}{\arule_1}$
  and $\adt_2\in\dirtrans{G}{\arule_2}$ that are parallel independent,
  if $G\trans{\set{\adt_1,\adt_2}} H'$ then $G\trans{\adt_1 + \adt_2} H'$.
\end{theorem}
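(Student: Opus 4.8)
The plan is to show that the parallel coherent transformation $G\trans{\set{\adt_1,\adt_2}} H'$ and the direct transformation $G\trans{\adt_1+\adt_2} H'$ yield literally the same object, by identifying the limit $D'$ of Definition~\ref{def-transfo} with the context object $D$ of the coproduct direct transformation, and then matching the colimit step of the former with the right pushout of the latter via the Butterfly Lemma. First I would recall that since $\arule_1,\arule_2$ are $\M$-weak spans, all of $l_1,l_2,i_1,i_2,r_1,r_2$ lie in $\M$, and since $\aCat$ is weak adhesive HLR with coproducts compatible with $\M$, the morphisms $l_1+l_2$, $i_1+i_2$ and $r_1+r_2$ are in $\M$ as well; hence the pushouts and pullbacks needed below all exist and behave well (property~1), and the left pushout of $\adt_1+\adt_2$ is also a pullback (property~2).

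The first key step is to analyze $D'$. By definition $\tuple{D',e_1,e_2}$ is a limit (a pullback, here, over the cospan $D_1\xrightarrow{f_1}G\xleftarrow{f_2}D_2$). I would use parallel independence: the morphisms $j_1:L_1\to D_2$ and $j_2:L_2\to D_1$ with $f_2\circ j_1 = m_1$, $f_1\circ j_2 = m_2$ give, together with $k_1,k_2$, the standard construction showing $D$ (the context of the parallel rule $\assoc{\arule_1}+\assoc{\arule_2}$, equivalently of $\arule_1+\arule_2$) is the pullback of $f_1$ and $f_2$; this is exactly the usual content of the Parallelism Theorem in the adhesive HLR setting. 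So $D\cong D'$, with $e_a$ corresponding to the projection $D\to D_a$, and $f\circ = f_1\circ e_1 = f_2\circ e_2$. Under this identification the morphism $d_c:I_c\to D'$ from Definition~\ref{def-transfo} is the unique morphism into the pullback induced by $j_c^1:I_c\to D_1$ and $j_c^2:I_c\to D_2$; in particular $d_1$ composed with the two projections is $\tuple{k_1\circ i_1, j_1^2}$ and $d_2$ gives $\tuple{j_2^1, k_2\circ i_2}$, so that $\addmorph{d_1}{d_2}:I_1+I_2\to D'$ equals $k\circ(i_1+i_2)$ — matching precisely the left leg of the right pushout of $\adt_1+\adt_2$.

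The second key step is the right-hand side. In $\adt_1+\adt_2$ the object $H$ is the pushout of $R_1+R_2 \xleftarrow{r_1+r_2} I_1+I_2 \xrightarrow{k\circ(i_1+i_2)} D'$. In the parallel coherent transformation, $H'_a$ is the pushout over $\tuple{r_a,d_a}$ for $a=1,2$, and $H'$ is the colimit (pushout) of $H'_1\xleftarrow{g'_1}D'\xrightarrow{g'_2}H'_2$. This is exactly the situation of the Butterfly Lemma with $E=D'$, $C$-role and the coproduct $R_1+R_2$: the lemma says that $H = H'$ as the pushout of $(r_1+r_2)$ along $\addmorph{d_1}{d_2}$ iff there exist the mediating maps $d_1,d_2$ making its left diagram commute with a pushout lower-right square — and the left diagram there is precisely the colimit cocone defining $H'$. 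Composing the two identifications $D\cong D'$ and $H\cong H'$ yields $G\trans{\adt_1+\adt_2} H'$.

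The main obstacle I anticipate is the first step: rigorously justifying $D\cong D'$, i.e.\ that the context of the coproduct rule applied with match $\addmorph{m_1}{m_2}$ is the pullback of $f_1,f_2$. This requires the hypothesis of parallel independence (through the $j_a$'s), the $\M$-POPB decomposition lemma to split the big pushout $\tuple{G,f,\addmorph{m_1}{m_2}}$ over $l_1+l_2$ into the two individual pushout/pullback squares for $\adt_1$ and $\adt_2$, and then the fact that in an adhesive HLR category pushout complements along $\M$ are unique. One must also be careful that $\assoc{\arule_1}+\assoc{\arule_2}$ and $\widehat{\arule_1+\arule_2}$ agree, so that Lemma~\ref{lm-assocspan} can be invoked to pass freely between the weak span $\arule_1+\arule_2$ and its associated span when needed; this is a routine pushout-pasting argument but should be stated explicitly. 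Everything else is bookkeeping with universal properties.
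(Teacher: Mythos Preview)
Your proposal uses the right machinery --- the Butterfly Lemma on both sides and the $\M$-POPB decomposition for the left --- and would ultimately succeed, but it is organized backwards relative to the paper and this costs you extra hypotheses. You posit a context object $D$ for $\adt_1+\adt_2$ (via the Parallelism Theorem) and then argue $D\cong D'$, which drags in uniqueness of pushout complements and the compatibility of $\assoc{\arule_1}+\assoc{\arule_2}$ with the associated span of $\arule_1+\arule_2$. The paper instead takes the pullback $D'$ already given by the hypothesis $G\trans{\set{\adt_1,\adt_2}} H'$ and shows \emph{directly} that $D'$ serves as the context of a diagram in $\dirtrans{G}{\arule_1+\arule_2}$. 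Concretely: from $f_2\circ j_1\circ l_1 = m_1\circ l_1 = f_1\circ k_1$ the pullback property yields $d'_1:K_1\to D'$ with $e_1\circ d'_1=k_1$ and $e_2\circ d'_1=j_1\circ l_1$ (and symmetrically $d'_2$), and one checks $d_a=d'_a\circ i_a$, so $\addmorph{d_1}{d_2}=\addmorph{d'_1}{d'_2}\circ(i_1+i_2)$ --- this is the map you call ``$k\circ(i_1+i_2)$'' without saying where $k$ comes from. The $\M$-POPB lemma is then applied not to ``split the big pushout over $l_1+l_2$'' as you describe, but to the configuration whose outer rectangle is the left pushout of $\adt_1$ and whose right square is the defining pullback $\tuple{D',e_1,e_2}$; this gives that $\tuple{D_2,j_1,e_2}$ is a pushout over $\tuple{l_1,d'_1}$, and together with its symmetric counterpart the Butterfly Lemma (with $C=D'$, $E=G$) assembles the left pushout of $\adt_1+\adt_2$ over $\addmorph{d'_1}{d'_2}$. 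The right-hand Butterfly application is exactly as you say. The payoff of the paper's order is that no external ingredients (Parallelism Theorem, uniqueness of complements, Lemma~\ref{lm-assocspan}) are needed: everything is built from $D'$.
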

\begin{proof}
  Since $\adt_1$ and $\adt_2$ are parallel independent there exist
  $j_1: L_1\rightarrow D_2$ and $j_2: L_2\rightarrow D_1$ such that
  $f_2\circ j_1=m_1$ and $f_1\circ j_2=m_2$, hence $\adt_1$ and
  $\adt_2$ are parallel coherent. Since $G\trans{\set{\adt_1,\adt_2}}
  H'$ there exists a pullback $\tuple{D',e_1,e_2}$ over $\tuple{f_1,f_2}$, hence
  there is a unique morphism $d_1:I_1\rightarrow D'$
  (resp. $d_2:I_2\rightarrow D'$) such that $e_1\circ d_1 = k_1\circ
  i_1$ and $e_2\circ d_1 = j_1\circ l_1\circ i_1$ (resp. $e_2\circ d_2 = k_2\circ
  i_2$ and $e_1\circ d_2 = j_2\circ l_2\circ i_2$). For the same
  reason there are pushouts $\tuple{H'_1,g'_1,n'_1}$ over
  $\tuple{r_1,d_1}$, $\tuple{H'_2,g'_2,n'_2}$ over
  $\tuple{r_2,d_2}$, and $\tuple{H',h_1,h_2}$ over
  $\tuple{g'_1,g'_2}$, hence the following diagram.
  \begin{center}
    \begin{tikzpicture}[scale=1.9]
        \node (G) at (0,0) {$G$};
        \node (D) at (1,-1) {$D'$};
        \node (H) at (2,-2) {$H'$};
        \node (L1) at (0,1) {$L_1$};
        \node (K1) at (1,1) {$K_1$};
        \node (D1) at (1,0) {$D_1$};
        \node (I1) at (1,2) {$I_1$};
        \node (R1) at (2,2) {$R_1$};
        \node (H1) at (2,-1) {$H'_1$}; 
        \node (L2) at (-1,0) {$L_2$};
        \node (K2) at (-1,-1) {$K_2$};
        \node (D2) at (0,-1) {$D_2$};
        \node (I2) at (-2,-1) {$I_2$};
        \node (R2) at (-2,-2) {$R_2$};
        \node (H2) at (1,-2) {$H'_2$}; 
        \path[->] (K1) edge node[fill=white, font=\footnotesize] {$l_1$} (L1);
        \path[->] (I1) edge node[fill=white, font=\footnotesize] {$i_1$} (K1);
        \path[->] (I1) edge node[fill=white, font=\footnotesize] {$r_1$} (R1);
        \path[->] (L1) edge node[fill=white, font=\footnotesize, near start] {$m_1$} (G);
        \path[->] (K1) edge node[fill=white, font=\footnotesize] {$k_1$} (D1);
        \path[->] (D1) edge node[fill=white, font=\footnotesize] {$f_1$} (G);
        \path[->] (D) edge node[fill=white, font=\footnotesize] {$g'_1$} (H1);
        \path[->] (R1) edge node[fill=white, font=\footnotesize] {$n'_1$} (H1);
        \path[->] (D) edge node[fill=white, font=\footnotesize] {$e_1$} (D1);
        \path[->] (H1) edge node[fill=white, font=\footnotesize] {$h_1$} (H);
        \path[->] (L1) edge [bend right] node[fill=white,
        font=\footnotesize, near end] {$j_1$} (D2);
        \path[->] (K1) edge [bend left, dashed] node[fill=white,
        font=\footnotesize] {$d'_1$} (D);
        \path[->] (I1) edge [bend left = 40, dashed] node[fill=white,
        font=\footnotesize] {$d_1$} (D);
        \path[->] (K2) edge node[fill=white, font=\footnotesize] {$l_2$} (L2);
        \path[->] (I2) edge node[fill=white, font=\footnotesize] {$i_2$} (K2);
        \path[->] (I2) edge node[fill=white, font=\footnotesize] {$r_2$} (R2);
        \path[->] (L2) edge node[fill=white, font=\footnotesize] {$m_2$} (G);
        \path[->] (K2) edge node[fill=white, font=\footnotesize] {$k_2$} (D2);
        \path[->] (D2) edge node[fill=white, font=\footnotesize] {$f_2$} (G);
        \path[->] (D) edge node[fill=white, font=\footnotesize] {$g'_2$} (H2);
        \path[->] (R2) edge node[fill=white, font=\footnotesize] {$n'_2$} (H2);
        \path[->] (D) edge node[fill=white, font=\footnotesize] {$e_2$} (D2);
        \path[->] (H2) edge node[fill=white, font=\footnotesize] {$h_2$} (H);
        \path[-] (L2) edge[draw=white, line width=3pt, bend left]  (D1);
        \path[->] (L2) edge [bend left] node[fill=white,
        font=\footnotesize, near end] {$j_2$} (D1);
        \path[->] (K2) edge [bend right, dashed] node[fill=white,
        font=\footnotesize] {$d'_2$} (D);
        \path[->] (I2) edge [bend right = 40, dashed] node[fill=white,
        font=\footnotesize] {$d_2$} (D);
    \end{tikzpicture}
  \end{center}
  By the Butterfly Lemma we get that $\tuple{H', \addmorph{h_1}{h_2},
    h_1\circ g'_1}$ is a pushout over $\tuple{\addmorph{d_1}{d_2},
    r_1+r_2}$.

  We have $f_2 \circ j_1\circ l_1 = m_1\circ l_1 = f_1\circ k_1$ hence
  there exists a unique morphism $d'_1:K_1\rightarrow D'$ such that
  $e_1\circ d'_1=k_1$ and $e_2\circ d'_1 = j_1\circ l_1$. This
  implies that $e_1\circ d'_1\circ i_1= k_1\circ i_1 = e_1\circ d_1$
  and $e_2\circ d'_1\circ i_1= j_1\circ l_1\circ i_1 = e_2\circ d_1$,
  hence by the unicity of $d_1$ that $d_1=d'_1\circ i_1$. Similarly
  there is a morphism $d'_2:K_2\rightarrow D'$ such that
  $d_2=d'_2\circ i_2$, hence $\addmorph{d_1}{d_2} =
  \addmorph{d'_1}{d'_2}\circ (i_1+i_2)$.

  Since $l_1,l_2\in\M$ and $\M$-morphisms are closed under pushouts
  and pullbacks, then $f_1,f_2\in\M$. In the
  diagram
  \begin{center}
        \begin{tikzpicture}[yscale=1.5,xscale=2]
      \node (G) at (0,0) {$L_1$};
      \node (L) at (0,1) {$K_1$};
      \node (K) at (1,1) {$D'$};
      \node (D) at (1,0) {$D_2$};
      \node (RK) at (2,1) {$D_1$};
      \node (H) at (2,0) {$G$}; 
      \path[<-] (K) edge node[fill=white, font=\footnotesize] {$d'_1$} (L);
      \path[->] (L) edge node[fill=white, font=\footnotesize] {$l_1$} (G);
      \path[->] (K) edge node[fill=white, font=\footnotesize] {$e_2$} (D);
      \path[<-] (D) edge node[fill=white, font=\footnotesize] {$j_1$} (G);
      \path[->] (D) edge node[fill=white, font=\footnotesize] {$f_2$} (H);
      \path[->] (K) edge node[fill=white, font=\footnotesize] {$e_1$} (RK);
      \path[->] (RK) edge node[fill=white, font=\footnotesize] {$f_1$} (H);
      \path[->] (L) edge [bend left] node[fill=white, font=\footnotesize] {$k_1$} (RK);
      \path[->] (G) edge [bend right] node[fill=white, font=\footnotesize] {$m_1$} (H);
    \end{tikzpicture}
  \end{center}
the external square is a pushout and the right square a pullback,
hence by the $\M$ pushout-pullback decomposition lemma, the right
square is also a pushout. Hence we can also apply the Butterfly Lemma
to the left pushouts of $\adt_1$ and $\adt_2$, and thus obtain that
$\tuple{G,\addmorph{m_1}{m_2},f_1\circ e_1}$ is a pushout over
$\tuple{\addmorph{d'_1}{d'_2}, l_1+l_2}$. This yields the following
coproduct $\adt_1+\adt_2$
\begin{center}
  \begin{tikzpicture}[xscale=3.5, yscale=2]
    \node (L) at (0,1) {$L_1+L_2$}; \node (K) at (1,1) {$K_1+K_2$};  \node (I) at
    (2,1) {$I_1+I_2$};  \node (R) at (3,1) {$R_1+R_2$}; \node (G) at (0.5,0) {$G$};
    \node (D) at (1.5,0) {$D'$}; \node (H) at (2.5,0) {$H'$};
    \node at (0.75,0.5) {\PO}; \node at (2.25,0.5) {\PO};
    \path[->] (K) edge node[fill=white, font=\footnotesize] {$l_1+l_2$} (L);
    \path[->] (I) edge node[fill=white, font=\footnotesize] {$i_1+i_2$} (K);
    \path[->] (I) edge node[fill=white, font=\footnotesize] {$r_1+r_2$} (R);
    \path[->] (L) edge node[fill=white, font=\footnotesize] {$\addmorph{m_1}{m_2}$} (G);
    \path[->] (K) edge node[fill=white, font=\footnotesize] {$\addmorph{d'_1}{d'_2}$} (D);
    \path[->] (D) edge node[fill=white, font=\footnotesize] {$f_1\circ e_1$} (G);
    \path[->] (I) edge node[fill=white, font=\footnotesize] {$\addmorph{d'_1}{d'_2}\circ (i_1+i_2)$} (D);
    \path[->] (D) edge node[fill=white, font=\footnotesize] {$h_1\circ g'_1$} (H);
    \path[->] (R) edge node[fill=white, font=\footnotesize] {$\addmorph{h_1}{h_2}$} (H);
  \end{tikzpicture}
\end{center}
and hence that $G\trans{\adt_1 + \adt_2} H'$.
\end{proof}

\begin{corollary}
  If $\aCat$ has coproducts compatible with $\M$ and
  $G\trans{\set{\adt_1,\adt_2}} H'$ then
  $\exists \adt'_2\in\dirtrans{H_1}{\arule_2}$ s.t.
  $G\trans{\adt_1} H_1\trans{\adt'_2} H'$.
\end{corollary}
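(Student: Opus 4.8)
The plan is to reduce the statement to the classical Parallelism Theorem (equivalently the Local Church--Rosser Theorem) for Double-Pushout rewriting in weak adhesive HLR categories \cite{EhrigEPT06}, by first passing to the coproduct rule via Theorem~\ref{thm-para-indep} and then to associated spans via Lemma~\ref{lm-assocspan}. Since a corollary inherits the hypotheses of Theorem~\ref{thm-para-indep}, we work with $\M$-weak spans $\arule_1,\arule_2$ and direct transformations $\adt_1\in\dirtrans{G}{\arule_1}$, $\adt_2\in\dirtrans{G}{\arule_2}$ that are parallel independent (parallel independence is essential here: the Fibonacci example, which is not parallel independent, admits no sequentialization of its parallel coherent transformation). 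As $\aCat$ has coproducts compatible with $\M$, the weak span $\arule_1+\arule_2$ exists and is again an $\M$-weak span, and parallel independence implies parallel coherence as already noted after Definition~\ref{def-coherent}, so Theorem~\ref{thm-para-indep} applies and gives a coproduct direct transformation with $G\trans{\ \adt_1+\adt_2\ }H'$.

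Next I would identify the associated span of the coproduct rule. Both coproducts and pushouts are colimits, hence they commute: the pushout of $\tuple{i_1+i_2,r_1+r_2}$ is $R'_1+R'_2$ (with the evident injections $i'_1+i'_2$ and $r'_1+r'_2$), where $\tuple{R'_a,i'_a,r'_a}$ is the pushout of $\tuple{i_a,r_a}$. Therefore $\assoc{\arule_1+\arule_2}$ is exactly the span $L_1+L_2\xleftarrow{l_1+l_2}K_1+K_2\xrightarrow{r'_1+r'_2}R'_1+R'_2$, that is, the parallel span $\assoc{\arule_1}+\assoc{\arule_2}$; compatibility of coproducts with $\M$ together with closure of $\M$ under pushouts shows this is a span of $\M$-morphisms. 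Applying Lemma~\ref{lm-assocspan} to the $\M$-weak span $\arule_1+\arule_2$ then turns $G\trans{\ \adt_1+\adt_2\ }H'$ into a standard Double-Pushout transformation of $G$ by $\assoc{\arule_1}+\assoc{\arule_2}$, with match $\addmorph{m_1}{m_2}$. On the other side, by the very definition of parallel independence for weak spans, $\assoc{\adt_1}\in\dirtrans{G}{\assoc{\arule_1}}$ and $\assoc{\adt_2}\in\dirtrans{G}{\assoc{\arule_2}}$ are parallel independent Double-Pushout transformations.

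It then remains to invoke the Parallelism Theorem of \cite{EhrigEPT06}: the parallel transformation of $G$ by $\assoc{\arule_1}+\assoc{\arule_2}$ at the match $\addmorph{m_1}{m_2}$ corresponds, under the bijection between parallel transformations and sequentially independent sequences, to a sequence $G\trans{\ \assoc{\adt_1}\ }H_1\trans{\ \sigma\ }H'$ with $\sigma\in\dirtrans{H_1}{\assoc{\arule_2}}$; the first step of this sequence is $\assoc{\adt_1}$ itself, since its match is $\addmorph{m_1}{m_2}\circ in_{L_1}=m_1$ and a Double-Pushout diagram is determined by its rule and match (here using that $l_1+l_2\in\M$ and that pushout complements along $\M$-morphisms are unique). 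A final application of Lemma~\ref{lm-assocspan}, now to the $\M$-weak span $\arule_2$ with source object $H_1$, converts $\sigma$ into a direct transformation $\adt'_2\in\dirtrans{H_1}{\arule_2}$ with $H_1\trans{\ \adt'_2\ }H'$. Together with $G\trans{\ \adt_1\ }H_1$, which holds by definition of $\adt_1$ (its right-hand object being $H_1$), this is the desired sequentialization.

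The main obstacle is the bookkeeping around the Parallelism Theorem rather than any new construction: one must check that the parallel Double-Pushout transformation extracted from Theorem~\ref{thm-para-indep} and the two uses of Lemma~\ref{lm-assocspan} is genuinely the one the Parallelism Theorem speaks about — same parallel rule, same match $\addmorph{m_1}{m_2}$, and built from the pushout complements $D_1,D_2$ of the given transformations — and that its $\assoc{\arule_1}$-first sequentialization really begins with $\assoc{\adt_1}$ (and not merely with some transformation via $\assoc{\arule_1}$), so that the object $H_1$ in the conclusion is precisely the one produced by $\adt_1$. This amounts to tracking the coproduct injections through the colimit and pushout-complement constructions, which is routine but relies on the closure properties of $\M$ recalled at the start of Section~\ref{sec-comparison} (closure under pushouts and pullbacks, and pushouts along $\M$-morphisms being pullbacks); one should also make sure the version of the Parallelism Theorem invoked is valid for matches not assumed to lie in $\M$.
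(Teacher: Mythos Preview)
Your proposal is correct and follows essentially the same approach as the paper: apply Theorem~\ref{thm-para-indep} to obtain $G\trans{\adt_1+\adt_2}H'$, pass to associated spans via Lemma~\ref{lm-assocspan}, observe that $\assoc{\arule_1+\arule_2}=\assoc{\arule_1}+\assoc{\arule_2}$, invoke the analysis part of the Parallelism Theorem, and use Lemma~\ref{lm-assocspan} once more to return to weak spans. Your treatment is in fact more careful than the paper's on the bookkeeping point you flag --- that the first step of the sequentialization is precisely $\assoc{\adt_1}$ (hence has result object $H_1$, the same as that of $\adt_1$) --- which the paper leaves implicit.
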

\begin{proof}
  Let $\arule = \arule_1 + \arule_2$, by Lemma \ref{lm-assocspan} there exists a
  $\anassocdt\in\dirtrans{G}{\assoc{\arule}}$ s.t. $G\trans{\anassocdt} H'$. It is easy
  to see that $\assoc{\arule} = \assoc{\arule_1} + \assoc{\arule_2}$, hence
  by the Parallelism Theorem (analysis part)  there is a direct transformation $\anassocdt_2
  \in\dirtrans{H_1}{\assoc{\arule_2}}$ such that
  $H_1\trans{\anassocdt_2}H'$, where $H_1$ is the object obtained by the direct transformation
  $\assoc{\adt_1}\in \dirtrans{G}{\assoc{\arule_1}}$, i.e., $G\trans{\assoc{\adt_1}} H_1$.
  Hence by Lemma \ref{lm-assocspan} there exists a $\adt'_2\in \dirtrans{H_1}{\arule_2}$ such that
  $G\trans{\adt_1} H_1\trans{\adt'_2} H'$.
\end{proof}

This means that a parallel coherent transformation of $G$ by two
parallel independent direct transformations yields a result that can
be obtained by a sequence of two direct transformations, in any order
(they are sequentially independent). This can be interpreted as a
result of correctness of parallel coherent transformations w.r.t. the
standard approach to (independent) parallelism of algebraic graph
transformations. In this sense, parallel coherence is a conservative
extension of parallel independence.

\section{Finitely Attributed Structures}\label{sec-finiteattr}

We now address the problem of the construction of a category suitable
to further develop the example of Sections~\ref{sec-intro} and
\ref{sec-weakspan}, and more generally the construction of categories
where parallel coherent transformations are guaranteed to exists and
can effectively be computed, provided suitable parallel coherent sets
are provided.

Our example requires a category of graphs whose nodes can be labelled
by zero or one attribute, namely a natural number. More importantly,
we saw in Section~\ref{sec-weakspan} that morphisms $l$ and $r$ of
both rules $(x\coloneq y)$ and $(y\coloneq x+y)$ map an unlabelled
node to a labelled node, hence the notion of morphism cannot be strict
on labels. This means that we cannot use the notion of comma
categories which is the choice tool for building categories of
attributed structures. Another candidate is to use the notion of
partially attributed structures, see \cite{DuvalEPR14}, but the
resulting category has few pushouts or colimits. We thus opt for a
more convenient notion of labels as sets of attributes.

As we are also concerned with the effective construction of parallel
coherent transformations, hence of finite limits and colimits, we
should be scrupulous about the finiteness of all structures
involved. This is particularly important since we should allow the
attributes to be chosen in infinite sets (e.g. natural numbers), which
means that pullbacks of finite attributed graphs may require infinitely
many nodes.

\begin{definition}
  Let $\aFcat$ be a category with pushouts, pullbacks and a
  pushout-preserving functor $V: \aFcat \rightarrow \FSets$, where
  $\FSets$ is the category of finite sets. Let $\attrCat$ be a
  category with a functor $U: \attrCat\rightarrow\Sets$. Let
  $\FPartf:\Sets\rightarrow \Sets$ be the functor that to every set
  maps the set of its finite subsets. Let
  $\injFS:\FSets\rightarrow\Sets$ be the canonical injective functor.
  We write $\Vfun\defeq \injFS\circ V$ and
  $\Sfun\defeq \FPartf\circ U$. 

  A \emph{finitely attributed strutcture} is a triple $\tuple{F,A,f}$
  where $F, A$ are objects in $\aFcat, \attrCat$ respectively and
  $f:\Vfun F\rightarrow \Sfun A$ is a function (a morphism in $\Sets$). A \emph{morphism of
    finitely attributed strutctures from $\tuple{F,A,f}$ to
    $\tuple{G,B,g}$} is a pair $\tuple{\sigma,\alpha}$ where
  $\sigma:F\rightarrow G$ is a morphism in $\aFcat$ and
  $\alpha:A\rightarrow B$ is a morphism in $\attrCat$ such that
  $\forall u\in\Vfun F, \Sfun\alpha\circ f(u) \subseteq g\circ
  \Vfun\sigma (u)$; it is \emph{neutral} if $A=B$ and
  $\alpha=\id{A}$. The identity morphism on $\tuple{F,A,f}$ is the
  morphism $\tuple{\id{F},\id{A}}$. The composite of morphisms
  $\tuple{\sigma,\alpha}: \tuple{F,A,f} \rightarrow \tuple{G,B,g}$ and
  $\tuple{\tau,\beta}: \tuple{G,B,g} \rightarrow \tuple{H,C,h}$ is
  $\tuple{\tau,\beta}\circ \tuple{\sigma,\alpha}\defeq
  \tuple{\tau\circ \sigma,\beta\circ \alpha}$, which is easily seen to
  be a morphism from $\tuple{F,A,f}$ to $\tuple{H,C,h}$. We denote
  $\FAS$ the category of finitely attributed structures.
  \begin{center}
    \begin{tikzpicture}[xscale=2,yscale=1.5]
      \node (F) at (0,1) {$\Vfun F$}; \node (G) at (1,1) {$\Vfun G$};
      \node (H) at (2,1) {$\Vfun H$}; \node (A) at (0,0) {$\Sfun A$};
      \node (B) at (1,0) {$\Sfun B$}; \node (C) at (2,0) {$\Sfun C$};
      \node at (0.5,0.5) {$\subseteq$}; \node at (1.5,0.5) {$\subseteq$};
      \path[->] (F) edge node[above, font=\footnotesize] {$\Vfun \sigma$} (G);
      \path[->] (G) edge node[above, font=\footnotesize] {$\Vfun \tau$} (H);
      \path[->] (A) edge node[below, font=\footnotesize] {$\Sfun \alpha$} (B);
      \path[->] (B) edge node[below, font=\footnotesize] {$\Sfun \beta$} (C);
      \path[->] (F) edge node[fill=white, font=\footnotesize] {$f$} (A);
      \path[->] (G) edge node[fill=white, font=\footnotesize] {$g$} (B);
      \path[->] (H) edge node[fill=white, font=\footnotesize] {$h$} (C);
    \end{tikzpicture}
  \end{center}
  For all $v\in\Vfun G$, we write $\invf{\Vfun\sigma}(v)\defeq
  \setof{u\in\Vfun F}{\Vfun\sigma(u)=v}$.
\end{definition}

For instance, $\aFcat$ can be the category of finite graphs and $V$ be
the functor that, to any finite graph $G=\tuple{V,E,s,t}$ maps the
direct sum $V+E$ in $\FSets$, hence $\Vfun G$ is the set of
``elements'' of $G$. $\attrCat$ can be the category of
$\Sigma$-algebras for some signature $\Sigma$, and $U$ the functor
that to any $\Sigma$-algebra $A$  maps its carrier set, hence $\Sfun
A$ contains the finite subsets of $U A$.

\begin{lemma}\label{lm-pushout}
  Let $\tuple{\sigma,\id{A}}:\tuple{F,A,f}\rightarrow \tuple{G,A,g}$ be
  a neutral morphism and
  $\tuple{\tau,\alpha}:\tuple{F,A,f}\rightarrow \tuple{H,B,h}$ a
  morphism with same codomain, let $\tuple{E,\sigma',\tau'}$ be a
  pushout over $\tuple{\sigma,\tau}$ in $\aFcat$, then $\tuple{\tuple{E,B,e},\,
    \tuple{\tau',\alpha},\,\tuple{\sigma',\id{B}}}$ is a pushout over
    $\tuple{\tuple{\sigma,\id{A}},\, \tuple{\tau,\alpha}}$, where for
  all $x\in \Vfun E$,
  \[e(x)= \Biggl(\bigcup_{v\in \invf{\Vfun \tau'}(x)}\Sfun
    \alpha \circ g(v)\Biggr) \cup \Biggl(\bigcup_{w\in \invf{\Vfun 
        \sigma'}(x)}\Sfun\id{B}\circ h(w) \Biggr).\]
\end{lemma}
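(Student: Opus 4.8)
The plan is to verify the universal property of the pushout in $\FAS$ directly, exploiting the fact that $\aFcat$ already provides the pushout on the structure component and that $V$ preserves pushouts, so that $\Vfun$ sends $\tuple{E,\sigma',\tau'}$ to a pushout square in $\Sets$. First I would check that $\tuple{E,B,e}$ is a well-defined object and that $\tuple{\tau',\alpha}$ and $\tuple{\sigma',\id{B}}$ are genuine morphisms of finitely attributed structures: the defining inequalities $\Sfun\alpha\circ g(v)\subseteq e\circ\Vfun\tau'(v)$ and $h(w)\subseteq e\circ\Vfun\sigma'(w)$ hold by construction, since $e(x)$ is defined as a union that includes exactly those terms when $x=\Vfun\tau'(v)$ or $x=\Vfun\sigma'(w)$. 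One must also note $e$ does land in $\Sfun B$, i.e.\ each $e(x)$ is a \emph{finite} subset of $UB$: this uses that $\Vfun E$ is finite (as $V$ lands in $\FSets$), hence each fibre $\invf{\Vfun\tau'}(x)$ and $\invf{\Vfun\sigma'}(x)$ is finite, and each $\Sfun\alpha\circ g(v)$, $h(w)$ is finite. Commutativity of the square of structured morphisms reduces to commutativity of $\tuple{E,\sigma',\tau'}$ in $\aFcat$ together with $\alpha\circ\id{A}=\alpha=\id{B}\circ\alpha$, both immediate.

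Next I would establish the universal property. Suppose $\tuple{X,C,x_0}$ receives morphisms $\tuple{\phi,\gamma}:\tuple{G,A,g}\rightarrow\tuple{X,C,x_0}$ and $\tuple{\psi,\beta'}:\tuple{H,B,h}\rightarrow\tuple{X,C,x_0}$ with $\tuple{\phi,\gamma}\circ\tuple{\sigma,\id{A}} = \tuple{\psi,\beta'}\circ\tuple{\tau,\alpha}$. On the structure component this says $\phi\circ\sigma = \psi\circ\tau$, so the pushout property of $E$ in $\aFcat$ yields a unique $\chi:E\rightarrow X$ with $\chi\circ\sigma' = \psi$ and $\chi\circ\tau' = \phi$. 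On the attribute component the hypothesis gives $\gamma = \beta'\circ\alpha$, so the only candidate for the attribute part of the mediating morphism is $\beta'$ itself, and uniqueness of $\tuple{\chi,\beta'}$ is then forced by uniqueness of $\chi$ and of the attribute component. The remaining work is to check that $\tuple{\chi,\beta'}:\tuple{E,B,e}\rightarrow\tuple{X,C,x_0}$ is actually a morphism, i.e.\ that $\Sfun\beta'\circ e(z)\subseteq x_0\circ\Vfun\chi(z)$ for every $z\in\Vfun E$. Here I would use that $\Vfun$ preserves the pushout, so every $z\in\Vfun E$ is of the form $\Vfun\tau'(v)$ for some $v\in\Vfun G$ or $\Vfun\sigma'(w)$ for some $w\in\Vfun H$ (possibly both); then distributing $\Sfun\beta'$ over the union defining $e(z)$ and applying, term by term, the morphism inequalities for $\tuple{\phi,\gamma}$ and $\tuple{\psi,\beta'}$ together with $\Vfun\chi\circ\Vfun\tau' = \Vfun\phi$ and $\Vfun\chi\circ\Vfun\sigma' = \Vfun\psi$ gives the desired containment; functoriality $\Sfun\beta'\circ\Sfun\alpha = \Sfun(\beta'\circ\alpha) = \Sfun\gamma$ handles the $\tau'$-terms.

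The main obstacle I anticipate is the bookkeeping around the fibres of $\Vfun\sigma'$ and $\Vfun\tau'$ and the fact that $\Vfun$ only preserves pushouts up to the canonical comparison: I must be careful that ``$z$ comes from $G$ or from $H$'' is exactly the statement that $\Vfun E$ is the pushout of $\Vfun\sigma,\Vfun\tau$ in $\Sets$ (a pushout of sets is a quotient of a disjoint union, so surjectivity of $[\Vfun\sigma',\Vfun\tau']$ onto $\Vfun E$ is what is needed), and that when $z$ has preimages on both sides the two ways of computing the bound on $\Sfun\beta' \circ e(z)$ agree — which they do precisely because $\tuple{\phi,\gamma}$ and $\tuple{\psi,\beta'}$ agree after precomposition, chased through $\chi$. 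A secondary subtlety is that the union in the formula for $e$ ranges over possibly empty fibres, giving $e(x)=\ensvide$ for elements of $E$ not hit by $\sigma'$ or $\tau'$ — but in a pushout of sets every element is hit, so this case is vacuous and causes no trouble; still, it is worth a sentence to dispatch it. All other steps are routine diagram chases using functoriality of $\Vfun$, $\Sfun$ and monotonicity of $\Sfun\alpha$ (as a map on subsets, $\Sfun\alpha$ is the direct-image map, hence preserves unions and inclusions), which makes the term-by-term manipulation of the unions legitimate.
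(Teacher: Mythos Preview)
Your proposal is correct and follows essentially the same approach as the paper's proof: check that $e(x)$ is finite, verify that $\tuple{\tau',\alpha}$ and $\tuple{\sigma',\id{B}}$ are morphisms by the obvious inclusion into the defining union, then establish universality by lifting the unique $\chi$ from the pushout in $\aFcat$ and using joint surjectivity of $(\Vfun\tau',\Vfun\sigma')$ (from pushout preservation of $V$ and $\injFS$) to bound $\Sfun\beta'\circ e(x)$ by $x_0\circ\Vfun\chi(x)$. One small slip to fix in your write-up: the fibres $\invf{\Vfun\tau'}(x)\subseteq\Vfun G$ and $\invf{\Vfun\sigma'}(x)\subseteq\Vfun H$ are finite because $\Vfun G$ and $\Vfun H$ are finite, not because $\Vfun E$ is.
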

\begin{proof}
Since $\invf{\Vfun \tau'}(x)\subseteq \Vfun G$, $\invf{\Vfun 
  \sigma'}(x) \subseteq \Vfun H$, $\Sfun \alpha \circ g(v)$ and
$\Sfun\id{B}\circ h(w)$ are all finite sets, then $e(x)$ is also a
finite set, hence $e$ is a function from $\Vfun E$ to $\Sfun B$.
  \begin{center}
    \begin{tikzpicture}[xscale=2,yscale=1.5]
      \node (G) at (0,1) {$\Vfun G$}; \node (E) at (1,1) {$\Vfun E$};
      \node (H) at (2,1) {$\Vfun H$}; \node (A) at (0,0) {$\Sfun A$};
      \node (B) at (1,0) {$\Sfun B$}; \node (C) at (2,0) {$\Sfun B$};
      \path[->] (G) edge node[above, font=\footnotesize] {$\Vfun \tau'$} (E);
      \path[->] (H) edge node[above, font=\footnotesize] {$\Vfun \sigma'$} (E);
      \path[->] (A) edge node[below, font=\footnotesize] {$\Sfun \alpha$} (B);
      \path[->] (C) edge node[below, font=\footnotesize] {$\Sfun \id{B}$} (B);
      \path[->] (G) edge node[fill=white, font=\footnotesize] {$g$} (A);
      \path[->] (E) edge node[fill=white, font=\footnotesize] {$e$} (B);
      \path[->] (H) edge node[fill=white, font=\footnotesize] {$h$} (C);
    \end{tikzpicture}
  \end{center}
We now prove that $\tuple{\tau',\alpha}$ and $\tuple{\sigma',\id{B}}$
are morphisms. For all $v\in \Vfun G$, let $x=\Vfun \tau'(v)$, then
obviously
\[\Sfun\alpha\circ g(v)\subseteq \bigcup_{v'\in \invf{\Vfun \tau'}(x)}\Sfun
  \alpha \circ g(v') \subseteq e(x) = e\circ\Vfun\tau'(v),\]
and similarly we get $\Sfun\id{B}\circ h(w)\subseteq
e\circ\Vfun\sigma'(w)$ for all $w\in\Vfun H$. The commutation property
$\tuple{\tau',\alpha}\circ \tuple{\sigma,\id{A}} =
\tuple{\sigma',\id{B}}\circ \tuple{\tau,\alpha}$ is obvious, hence
there only remains to prove the universal property. For all finitely
attributed structure $\tuple{Z,C,z}$ and morphisms
$\tuple{\varphi,\beta}:\tuple{G,A,g}\rightarrow \tuple{Z,C,z}$ and
$\tuple{\psi,\gamma}:\tuple{H,B,h}\rightarrow \tuple{Z,C,z}$ such that
$\tuple{\varphi,\beta}\circ \tuple{\sigma,\id{A}} = \tuple{\psi,\gamma}
\circ \tuple{\tau,\alpha}$, there exists a unique morphism
$\chi:E\rightarrow Z$ in $\aFcat$ such that $\varphi=\chi\circ
\tau'$ and $\psi=\chi\circ \sigma'$.
  \begin{center}
    \raisebox{1.3cm}{\begin{tikzpicture}[xscale=2,yscale=1.5]
      \node (G) at (0,1) {$\Vfun G$}; \node (E) at (1,1) {$\Vfun Z$};
      \node (H) at (2,1) {$\Vfun H$}; \node (A) at (0,0) {$\Sfun A$};
      \node (B) at (1,0) {$\Sfun C$}; \node (C) at (2,0) {$\Sfun B$};
      \node at (0.5,0.5) {$\subseteq$}; \node at (1.5,0.5) {$\supseteq$};
      \path[->] (G) edge node[above, font=\footnotesize] {$\Vfun \varphi$} (E);
      \path[->] (H) edge node[above, font=\footnotesize] {$\Vfun \psi$} (E);
      \path[->] (A) edge node[below, font=\footnotesize] {$\Sfun \beta$} (B);
      \path[->] (C) edge node[below, font=\footnotesize] {$\Sfun \gamma$} (B);
      \path[->] (G) edge node[fill=white, font=\footnotesize] {$g$} (A);
      \path[->] (E) edge node[fill=white, font=\footnotesize] {$z$} (B);
      \path[->] (H) edge node[fill=white, font=\footnotesize] {$h$} (C);
    \end{tikzpicture}}\hspace{1cm}
    \begin{tikzpicture}[xscale=3,yscale=1.5]
      \node (F) at (0,2) {$\tuple{F,A,f}$}; \node (G) at (1,2) {$\tuple{G,A,g}$};
      \node (H) at (0,1) {$\tuple{H,B,h}$}; \node (E) at (1,1) {$\tuple{E,B,e}$};
      \node (Z) at (2,0) {$\tuple{Z,C,z}$}; 
      \path[->] (F) edge node[above, font=\footnotesize] {$\tuple{\sigma,\id{A}}$} (G);
      \path[->] (F) edge node[left, font=\footnotesize] {$\tuple{\tau,\alpha}$} (H);
      \path[->] (G) edge node[left, font=\footnotesize] {$\tuple{\tau',\alpha}$} (E);
      \path[->] (H) edge node[above, font=\footnotesize] {$\tuple{\sigma',\id{B}}$} (E);
      \path[->] (G) edge node[right, font=\footnotesize] {$\tuple{\varphi,\beta}$} (Z);
      \path[->] (H) edge node[below, font=\footnotesize] {$\tuple{\psi,\gamma}$} (Z);
      \path[->,dashed] (E) edge node[fill=white, font=\footnotesize] {$\tuple{\chi,\gamma}$} (Z);
    \end{tikzpicture}
  \end{center}
  If there is a morphism $m:\tuple{E,B,e}\rightarrow \tuple{Z,C,z}$
  such that $\tuple{\varphi,\beta}=m\circ \tuple{\tau',\alpha}$
  and $\tuple{\psi,\gamma}=m\circ \tuple{\sigma',\id{B}}$ then it must
  be $\tuple{\chi,\gamma}$ (by unicity of $\chi$), hence we only need
  to prove that this is indeed a morphism. For all $x\in\Vfun E$, we
  have
  \begin{align*}
    \Sfun\gamma\circ e(x) 
    &= \Biggl(\bigcup_{v\in \invf{\Vfun 
      \tau'}(x)}\Sfun\gamma\circ \Sfun
      \alpha \circ g(v)\Biggr) \cup \Biggl(\bigcup_{w\in \invf{\Vfun 
      \sigma'}(x)}\Sfun\gamma\circ \Sfun\id{B}\circ h(w) \Biggr)\\
    &= \Biggl(\bigcup_{v\in \invf{\Vfun 
      \tau'}(x)}\Sfun\beta \circ g(v)\Biggr) \cup \Biggl(\bigcup_{w\in \invf{\Vfun 
      \sigma'}(x)}\Sfun\gamma\circ h(w) \Biggr)\\
    &\subseteq \Biggl(\bigcup_{v\in \invf{\Vfun 
      \tau'}(x)}z \circ \Vfun\varphi(v)\Biggr) \cup \Biggl(\bigcup_{w\in \invf{\Vfun 
      \sigma'}(x)} z\circ \Vfun\psi(w) \Biggr)\\
    &\subseteq \Biggl(\bigcup_{v\in \invf{\Vfun 
      \tau'}(x)}z \circ \Vfun\chi\circ \Vfun\tau'(v)\Biggr) \cup \Biggl(\bigcup_{w\in \invf{\Vfun 
      \sigma'}(x)} z\circ \Vfun\chi\circ \Vfun\sigma'(w) \Biggr).
  \end{align*}
  If $\invf{\Vfun \tau'}(x)\neq\ensvide$ then
  $\bigcup_{v\in \invf{\Vfun \tau'}(x)}z \circ \Vfun\chi\circ
  \Vfun\tau'(v) = z\circ \Vfun\chi (x)$ (and is otherwise empty) and similarly if
  $\invf{\Vfun \tau'}(x)\neq\ensvide$ then 
  $\bigcup_{w\in \invf{\Vfun \sigma'}(x)} z\circ \Vfun\chi\circ
  \Vfun\sigma'(w) = z\circ \Vfun\chi (x)$. Since the functors $V$ and
  $\injFS$ are pushout preserving, then $\tuple{\Vfun E,
    \Vfun\tau',\Vfun\sigma'}$ is a pushout over $\tuple{\Vfun\tau,
    \Vfun\sigma}$ in $\Sets$, hence the pair
  $\tuple{\Vfun\tau',\Vfun\sigma'}$ is jointly surjective, which means
  that at least one of $\invf{\Vfun \tau'}(x)$, $\invf{\Vfun 
    \sigma'}(x)$ is non empty, and yields the result $\Sfun\gamma\circ
  e(x) \subseteq z\circ \Vfun\chi (x)$.
\end{proof}

\begin{corollary}\label{cor-colimit}
  For all integer $p\geq 1$, if $g_a:D\rightarrow H_a$ is a neutral
  morphism for all $1\leq a\leq p$, then there exists a colimit
  $\tuple{H, h_1,\dotsc,h_p}$ over $\tuple{g_1,\dotsc,g_p}$ such that
  $h_1,\dotsc,h_p$ are neutral morphisms.
\end{corollary}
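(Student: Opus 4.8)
The plan is to prove the statement by induction on $p$, building the colimit one morphism at a time with the help of Lemma~\ref{lm-pushout}. The crucial remark is that a morphism of $\FAS$ is neutral precisely when its attribute component is an identity morphism; consequently, when the two legs of a span are both neutral, the morphism called $\alpha$ in Lemma~\ref{lm-pushout} is an identity, so the pushout produced by that lemma has the \emph{same} attribute object as the apex of the span and its two legs $\tuple{\tau',\alpha}$ and $\tuple{\sigma',\id{B}}$ are \emph{both} neutral. This is exactly what lets the induction proceed: the partial colimit constructed at each stage is linked to $D$ by a neutral morphism, which can once more serve as the neutral leg in the next application of the lemma.

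For $p=1$ the diagram is just $D\xrightarrow{g_1}H_1$, and $\tuple{H_1,\id{H_1}}$ is a colimit over $\tuple{g_1}$ with $\id{H_1}$ neutral. Assume $p\geq 2$ and the result for $p-1$. Since $g_1,\dotsc,g_p$ are neutral they all have as attribute object the attribute object $A$ of $D$ and an identity attribute component. By the induction hypothesis, $\tuple{g_1,\dotsc,g_{p-1}}$ has a colimit $\tuple{H^\ast,h_1^\ast,\dotsc,h_{p-1}^\ast}$ with each $h_a^\ast$ neutral; put $g^\ast\defeq h_1^\ast\circ g_1$, which, being equal to $h_a^\ast\circ g_a$ for every $a\leq p-1$, does not depend on the index chosen and is neutral as a composite of neutral morphisms. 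I then apply Lemma~\ref{lm-pushout} to the span $H^\ast\xleftarrow{g^\ast}D\xrightarrow{g_p}H_p$, whose left leg $g^\ast$ is neutral and whose right leg $g_p$ is neutral; this yields a pushout $\tuple{\tilde H,q,h_p}$ with $q:H^\ast\rightarrow\tilde H$ and $h_p:H_p\rightarrow\tilde H$ both neutral and $q\circ g^\ast=h_p\circ g_p$.

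It then remains to set $h_a\defeq q\circ h_a^\ast$ for $1\leq a\leq p-1$ — again neutral, as a composite of neutral morphisms — and to check that $\tuple{\tilde H,h_1,\dotsc,h_p}$ is a colimit over $\tuple{g_1,\dotsc,g_p}$. That it is a cocone is immediate: $h_a\circ g_a=q\circ g^\ast$ for $a\leq p-1$ and $h_p\circ g_p=q\circ g^\ast$ as well. For the universal property, let $\tuple{Z,z_1,\dotsc,z_p}$ be any cocone; the family $z_1,\dotsc,z_{p-1}$ is a cocone over $\tuple{g_1,\dotsc,g_{p-1}}$ (all the $z_a\circ g_a$ agree), so it factors uniquely as $z_a=w\circ h_a^\ast$ through $H^\ast$. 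One checks $w\circ g^\ast=z_1\circ g_1=z_p\circ g_p$, so $w$ and $z_p$ form a cocone on the pushout span and factor uniquely as $w=u\circ q$, $z_p=u\circ h_p$ through $\tilde H$; then $u\circ h_a=u\circ q\circ h_a^\ast=z_a$ for every $a$, and the uniqueness of $u$ follows from the uniqueness of $w$ and of the mediating morphism of the pushout. This closes the induction.

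The whole argument is elementary diagram chasing; the one place where the hypothesis really bites — and hence the step I would watch most carefully — is the preservation of neutrality under Lemma~\ref{lm-pushout}. It rests on the fact that $\Sfun$ preserves identities, so that a span of two neutral morphisms has a pushout whose apex still lies over the common attribute object $A$ and whose two legs remain neutral; were either leg not neutral, the attribute object would change and the construction could not be iterated to absorb the remaining $g_a$'s.
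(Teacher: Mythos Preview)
Your proof is correct and is precisely the ``easy induction on $p$'' that the paper invokes as its entire proof of this corollary; you have simply fleshed out the details, including the key observation that when both legs of the span are neutral, Lemma~\ref{lm-pushout} produces a pushout with both legs neutral, which is what keeps the induction going.
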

\begin{proof}
  By an easy induction on $p$.
\end{proof}

Contrary to pushouts, we need to restrict the construction of pullbacks to
the cases where both morphisms are neutral.

\begin{lemma}\label{lm-pullback}
  Let $\tuple{\sigma,\id{A}}:\tuple{G,A,g}\rightarrow \tuple{F,A,f}$ and
  $\tuple{\tau,\id{A}}:\tuple{H,A,h}\rightarrow \tuple{F,A,f}$ be two
  morphisms and $\tuple{E,\sigma',\tau'}$ a
  pullback over $\tuple{\sigma,\tau}$ in $\aFcat$, then
  $\tuple{\tuple{E,A,e},\, \tuple{\sigma',\id{A}},\, \tuple{\tau',\id{A}}}$ is
  a pullback over $\tuple{\tuple{\sigma,\id{A}},\, \tuple{\tau,\id{A}}}$,
  where for all $x\in\Vfun E$, $e(x) = g\circ\Vfun\tau'(x)\cap
  h\circ\Vfun\sigma'(x)$. 
\end{lemma}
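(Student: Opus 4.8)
The plan is to mimic the structure of the proof of Lemma~\ref{lm-pushout}, dualising where necessary. First I would check that $e$ as defined is a legitimate function $\Vfun E\rightarrow\Sfun A$: each value $e(x)$ is an intersection of two finite subsets of $UA$ (namely $g\circ\Vfun\tau'(x)$ and $h\circ\Vfun\sigma'(x)$, which lie in $\Sfun A=\FPartf(UA)$ by hypothesis), hence is itself a finite subset of $UA$, so $e(x)\in\Sfun A$. Note that this is the point where neutrality of both input morphisms is used: because the attribute component of both is $\id A$, the two candidate attribute-sets $g\circ\Vfun\tau'(x)$ and $h\circ\Vfun\sigma'(x)$ live in the \emph{same} set $\Sfun A$ and their intersection makes sense; with non-neutral morphisms there would be no canonical way to intersect.

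Next I would verify that $\tuple{\sigma',\id A}$ and $\tuple{\tau',\id A}$ are morphisms of finitely attributed structures, i.e.\ that for all $x\in\Vfun E$ we have $e(x)\subseteq g\circ\Vfun\tau'(x)$ and $e(x)\subseteq h\circ\Vfun\sigma'(x)$; both are immediate from the definition of $e$ as an intersection (and $\Sfun\id A$ is the identity). The square commutes on the $\aFcat$-component since $\tuple{E,\sigma',\tau'}$ is a pullback in $\aFcat$, and on the attribute component trivially since all attribute morphisms are $\id A$.

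The heart of the argument is the universal property. Given a finitely attributed structure $\tuple{Z,C,z}$ and morphisms $\tuple{\varphi,\beta}:\tuple{Z,C,z}\rightarrow\tuple{G,A,g}$ and $\tuple{\psi,\gamma}:\tuple{Z,C,z}\rightarrow\tuple{H,A,h}$ with $\tuple{\sigma,\id A}\circ\tuple{\varphi,\beta}=\tuple{\tau,\id A}\circ\tuple{\psi,\gamma}$, the first consequence is $\beta=\gamma$ (equality of the attribute components, forced by composing with $\id A$ on both sides), and $\sigma\circ\varphi=\tau\circ\psi$ in $\aFcat$; since $V$ (and $\injFS$) preserves pullbacks, $\tuple{\Vfun E,\Vfun\sigma',\Vfun\tau'}$ is a pullback in $\Sets$, so there is a unique $\chi:Z\rightarrow E$ in $\aFcat$ with $\sigma'\circ\chi=\varphi$ and $\tau'\circ\chi=\psi$. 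The only thing left is to show $\tuple{\chi,\beta}$ is a morphism, i.e.\ for all $z_0\in\Vfun Z$, $\Sfun\beta\circ z(z_0)\subseteq e\circ\Vfun\chi(z_0)$. Writing $x=\Vfun\chi(z_0)$, the morphism conditions on $\tuple{\varphi,\beta}$ and $\tuple{\psi,\beta}$ give $\Sfun\beta\circ z(z_0)\subseteq g\circ\Vfun\varphi(z_0)=g\circ\Vfun\tau'(x)$ and likewise $\Sfun\beta\circ z(z_0)\subseteq h\circ\Vfun\sigma'(x)$; since $e(x)$ is precisely the intersection of these two sets, we conclude $\Sfun\beta\circ z(z_0)\subseteq e(x)$, as required. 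Uniqueness of the mediating morphism follows from uniqueness of $\chi$ together with the fact that the attribute component is forced to be $\beta$. I expect the main (very mild) obstacle to be bookkeeping the neutrality hypothesis carefully so that all attribute sets being intersected genuinely lie in $\Sfun A$; everything else is routine once the pullback in $\Vfun$-land is in place.
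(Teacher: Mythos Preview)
Your proposal is correct and follows essentially the same route as the paper's proof. One caveat: you invoke that $V$ (and $\injFS$) preserves pullbacks, but the paper only assumes $V$ is \emph{pushout}-preserving; fortunately this extra hypothesis is unnecessary, since the unique $\chi:Z\rightarrow E$ already comes directly from $\tuple{E,\sigma',\tau'}$ being a pullback in $\aFcat$, and mere functoriality of $\Vfun$ then yields $\Vfun\varphi=\Vfun\tau'\circ\Vfun\chi$ and $\Vfun\psi=\Vfun\sigma'\circ\Vfun\chi$, which is all your final inclusion argument uses. (Minor slip: with the paper's conventions $\tau':E\to G$ and $\sigma':E\to H$, so the factorisation should read $\tau'\circ\chi=\varphi$ and $\sigma'\circ\chi=\psi$; your subsequent computation already uses the correct pairing.)
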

\begin{proof}
  For all $x\in\Vfun E$, $e(x)$ is obviously a finite set, hence
  $e:\Vfun E\rightarrow \Sfun A$ in $\Sets$.
  \begin{center}
    \begin{tikzpicture}[xscale=2,yscale=1.5]
      \node (G) at (0,1) {$\Vfun G$}; \node (E) at (1,1) {$\Vfun E$};
      \node (H) at (2,1) {$\Vfun H$}; \node (A) at (0,0) {$\Sfun A$};
      \node (B) at (1,0) {$\Sfun A$}; \node (C) at (2,0) {$\Sfun A$};
      \path[<-] (G) edge node[above, font=\footnotesize] {$\Vfun \tau'$} (E);
      \path[<-] (H) edge node[above, font=\footnotesize] {$\Vfun \sigma'$} (E);
      \path[<-] (A) edge node[below, font=\footnotesize] {$\Sfun \id{A}$} (B);
      \path[<-] (C) edge node[below, font=\footnotesize] {$\Sfun \id{A}$} (B);
      \path[->] (G) edge node[fill=white, font=\footnotesize] {$g$} (A);
      \path[->] (E) edge node[fill=white, font=\footnotesize] {$e$} (B);
      \path[->] (H) edge node[fill=white, font=\footnotesize] {$h$} (C);
    \end{tikzpicture}
  \end{center}
It is obvious that $\tuple{\tau',\id{A}}$ and $\tuple{\sigma',\id{A}}$
are morphisms since for all $x\in \Vfun E$, $\Sfun\id{A}\circ e(x) =
e(x) \subseteq g\circ\Vfun\tau'(x)$ and $e(x)\subseteq h\circ\Vfun\sigma'(x)$.
The commutation property is obvious, hence
there only remains to prove the universal property. For all finitely
attributed structure $\tuple{Z,B,z}$ and morphisms
$\tuple{\varphi,\beta}: \tuple{Z,B,z}\rightarrow \tuple{G,A,g}$ and
$\tuple{\psi,\gamma}: \tuple{Z,B,z}\rightarrow \tuple{H,A,h}$ such that
$\tuple{\sigma,\id{A}}\circ \tuple{\varphi,\beta} =
\tuple{\tau,\id{A}}\circ \tuple{\psi,\gamma}$, then $\beta=\gamma$
and there exists a unique morphism
$\chi:Z\rightarrow E$ in $\aFcat$ such that $\varphi=
\tau'\circ\chi$ and $\psi=\sigma'\circ\chi$.
  \begin{center}
    \raisebox{0cm}{\begin{tikzpicture}[xscale=2,yscale=1.5]
      \node (G) at (0,1) {$\Vfun G$}; \node (E) at (1,1) {$\Vfun Z$};
      \node (H) at (2,1) {$\Vfun H$}; \node (A) at (0,0) {$\Sfun A$};
      \node (B) at (1,0) {$\Sfun B$}; \node (C) at (2,0) {$\Sfun A$};
      \node at (0.5,0.5) {$\supseteq$}; \node at (1.5,0.5) {$\subseteq$};
      \path[<-] (G) edge node[above, font=\footnotesize] {$\Vfun \varphi$} (E);
      \path[<-] (H) edge node[above, font=\footnotesize] {$\Vfun \psi$} (E);
      \path[<-] (A) edge node[below, font=\footnotesize] {$\Sfun \beta$} (B);
      \path[<-] (C) edge node[below, font=\footnotesize] {$\Sfun \beta$} (B);
      \path[->] (G) edge node[fill=white, font=\footnotesize] {$g$} (A);
      \path[->] (E) edge node[fill=white, font=\footnotesize] {$z$} (B);
      \path[->] (H) edge node[fill=white, font=\footnotesize] {$h$} (C);
    \end{tikzpicture}}\hspace{0cm}
    \begin{tikzpicture}[xscale=-3,yscale=-1.5]
      \node (F) at (0,2) {$\tuple{F,A,f}$}; \node (G) at (0,1) {$\tuple{G,A,g}$};
      \node (H) at (1,2) {$\tuple{H,A,h}$}; \node (E) at (1,1) {$\tuple{E,A,e}$};
      \node (Z) at (2,0) {$\tuple{Z,B,z}$}; 
      \path[<-] (F) edge node[right, font=\footnotesize] {$\tuple{\sigma,\id{A}}$} (G);
      \path[<-] (F) edge node[below, font=\footnotesize] {$\tuple{\tau,\id{A}}$} (H);
      \path[<-] (G) edge node[below, font=\footnotesize] {$\tuple{\tau',\id{A}}$} (E);
      \path[<-] (H) edge node[right, font=\footnotesize] {$\tuple{\sigma',\id{A}}$} (E);
      \path[<-] (G) edge node[above, font=\footnotesize] {$\tuple{\varphi,\beta}$} (Z);
      \path[<-] (H) edge node[left, font=\footnotesize] {$\tuple{\psi,\beta}$} (Z);
      \path[<-,dashed] (E) edge node[fill=white, font=\footnotesize] {$\tuple{\chi,\beta}$} (Z);
    \end{tikzpicture}
  \end{center}
  The only suitable morphism from $\tuple{Z,B,z}$ to $\tuple{E,A,e}$
  must be of the form $\tuple{\chi,\beta}$, which is a morphism since
  for all $x\in\Vfun Z$, we have $\Sfun\beta\circ z(x)\subseteq
  g\circ\Vfun\varphi (x)= g\circ\Vfun(\tau'\circ\chi)(x)$ and
  similarly $\Sfun\beta\circ z(x)\subseteq
  h\circ\Vfun\psi (x)= h\circ\Vfun(\sigma'\circ\chi)(x)$, hence \[\Sfun\beta\circ z(x)\subseteq
  g\circ\Vfun\tau'\circ\Vfun\chi(x) \cap
  h\circ\Vfun\sigma'\circ\Vfun\chi(x) = e\circ\Vfun\chi(x).\]
\end{proof}

\begin{corollary}\label{cor-limit}
    For all integer $p\geq 1$, if $f_a:D_a\rightarrow G$ is a neutral
  morphism for all $1\leq a\leq p$, then there exists a limit
  $\tuple{D, e_1,\dotsc,e_p}$ over $\tuple{f_1,\dotsc,f_p}$ such that
  $e_1,\dotsc,e_p$ are neutral morphisms.
 \end{corollary}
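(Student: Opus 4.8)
The plan is to follow the pattern of Corollary~\ref{cor-colimit} and argue by induction on $p$, using Lemma~\ref{lm-pullback} as the inductive engine together with the elementary fact that the composite of two neutral morphisms is again neutral (the second components compose as $\id{A}\circ\id{A}=\id{A}$). For the base case $p=1$, the limit over the single morphism $f_1:D_1\rightarrow G$ is $D_1$ itself equipped with $e_1=\id{D_1}$, which is neutral.

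For the inductive step, I would assume a limit $\tuple{D,e_1,\dotsc,e_p}$ over $\tuple{f_1,\dotsc,f_p}$ in which every $e_a$ is neutral, and let $f_{p+1}:D_{p+1}\rightarrow G$ be neutral; by the induction hypothesis the objects $D,D_1,\dotsc,D_p,G$ all share the same attribute object $A$. Being a cone, $\tuple{D,e_1,\dotsc,e_p}$ satisfies $f_a\circ e_a=f_b\circ e_b$ for all $a,b$; let $\varphi:D\rightarrow G$ denote this common composite. Then $\varphi=f_1\circ e_1$ is a composite of neutral morphisms, hence neutral, so Lemma~\ref{lm-pullback} applies to the pair $\tuple{\varphi,f_{p+1}}$ and produces a pullback $\tuple{E,\sigma',\tau'}$ with $\sigma':E\rightarrow D$ and $\tau':E\rightarrow D_{p+1}$ both neutral. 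I then claim that $\tuple{E,\;e_1\circ\sigma',\dotsc,e_p\circ\sigma',\;\tau'}$ is a limit over $\tuple{f_1,\dotsc,f_{p+1}}$; each of its legs is a composite of neutral morphisms, hence neutral, which yields the statement.

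It remains to check the universal property, which is just the standard decomposition of a wide pullback into iterated binary pullbacks. Given any cone $\tuple{Z,z_1,\dotsc,z_{p+1}}$ over $\tuple{f_1,\dotsc,f_{p+1}}$, the universal property of $D$ yields a unique $\zeta:Z\rightarrow D$ with $e_a\circ\zeta=z_a$ for $1\leq a\leq p$; then $\varphi\circ\zeta=f_1\circ e_1\circ\zeta=f_1\circ z_1=f_{p+1}\circ z_{p+1}$, so $\tuple{Z,\zeta,z_{p+1}}$ is a cone over $\tuple{\varphi,f_{p+1}}$, and the pullback $E$ provides a unique $\xi:Z\rightarrow E$ with $\sigma'\circ\xi=\zeta$ and $\tau'\circ\xi=z_{p+1}$; this $\xi$ is the desired mediating morphism, and its uniqueness is immediate by retracing the two universal properties. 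I do not anticipate a genuine obstacle: the single point that really matters is that the apex morphism $\varphi$ of the partial limit be neutral --- otherwise Lemma~\ref{lm-pullback}, which only builds pullbacks of neutral pairs, would be unavailable --- which is precisely why the neutrality of the legs must be carried through the induction rather than extracted only at the end.
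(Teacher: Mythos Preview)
Your proposal is correct and is precisely the ``easy induction on $p$'' that the paper invokes for this corollary, just spelled out in full. The key observation you highlight---that the apex morphism $\varphi=f_1\circ e_1$ remains neutral so that Lemma~\ref{lm-pullback} applies at each stage---is exactly the reason the induction goes through.
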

\begin{proof}
  By an easy induction on $p$.
\end{proof}

With these constructions and their restrictions on morphisms, we can
only achieve transformations of finitely attributed structures that
preserve the object $A$ in which the labels are chosen (e.g. the set
of natural numbers). This is of course convenient to our running
example, and should be considered as good practice.

\begin{definition}
  A weak span $\arule$ in $\FAS$ is \emph{neutral} if its morphisms $l$, $i$
  and $r$ are neutral. For any object $G$, a direct transformation
  $\adt\in\dirtrans{G}{\arule}$ is \emph{neutral} if $\arule$ and its
  morphisms $f$ and $g$ are neutral. Let $\dirtransn{G}{\arule}$ be the set
  of neutral direct transformations of $G$ by $\arule$.
\end{definition}

\begin{theorem}
  For any set $\R$ of neutral weak spans in $\FAS$, for any object $G$
  and finite parallel coherent set $\Gamma\subseteq \dirtransn{G}{\R}$, there
  exists an object $H'$, unique up to isomorphism, such that $G\trans{\
    \Gamma\ } H'$.
\end{theorem}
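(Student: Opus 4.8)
The plan is to verify that all the categorical constructions required by Definition~\ref{def-transfo} exist in $\FAS$ and stay within the subcategory of neutral morphisms, so that the recipe of Figure~\ref{fig-pct} goes through, and then to argue uniqueness from the universal properties. Concretely, a parallel coherent transformation of $G$ by $\Gamma=\set{\adt_1,\dotsc,\adt_p}$ requires: (i) a limit $D'$ over $\tuple{f_1,\dotsc,f_p}$; (ii) the mediating morphisms $d_c:I_c\rightarrow D'$ — whose existence is purely formal, following from Lemma~\ref{lm-exist-j} and the universal property of $D'$ once $D'$ exists; (iii) pushouts $H'_a$ over $\tuple{r_a,d_a}$; and (iv) a colimit $H'$ over $\tuple{g'_1,\dotsc,g'_p}$. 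So the whole proof reduces to checking that (i), (iii) and (iv) are available, and for this the neutrality hypothesis on $\Gamma$ is exactly what lets us invoke the lemmas already proved.

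First I would handle the limit $D'$. Each $\adt_a\in\dirtransn{G}{\R}$ is neutral, so in particular each $f_a:D_a\rightarrow G$ is a neutral morphism; Corollary~\ref{cor-limit} then gives a limit $\tuple{D',e_1,\dotsc,e_p}$ over $\tuple{f_1,\dotsc,f_p}$ with every $e_a$ neutral. Next, for step~(iii): the morphism $d_a:I_a\rightarrow D'$ obtained in step~(ii) need not be neutral, but the pushout in Lemma~\ref{lm-pushout} is computed along an arbitrary morphism on one leg and a \emph{neutral} morphism on the other. The neutral leg here is $r_a:I_a\rightarrow R_a$ — part of the neutral weak span $\arule_a$ — so Lemma~\ref{lm-pushout} applies with $\sigma:=r_a$ (neutral) and $\tau:=d_a$, producing a pushout $\tuple{H'_a,g'_a,n'_a}$ over $\tuple{r_a,d_a}$ in which the morphism opposite the neutral one, namely $g'_a:D'\rightarrow H'_a$, is again neutral. (I should double-check the orientation: Lemma~\ref{lm-pushout} yields that the pushout of the neutral morphism is neutral, i.e.\ $g'_a$ neutral because $r_a$ is neutral; this is precisely what is needed for the next step.) Finally, step~(iv): since each $g'_a:D'\rightarrow H'_a$ is neutral, Corollary~\ref{cor-colimit} provides a colimit $\tuple{H',h_1,\dotsc,h_p}$ over $\tuple{g'_1,\dotsc,g'_p}$. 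This assembles the full diagram of Figure~\ref{fig-pct}, so $G\trans{\ \Gamma\ }H'$.

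For uniqueness up to isomorphism, I would note that each ingredient is a limit or colimit and hence determined up to unique isomorphism: $D'$ is unique up to iso as a limit over $\tuple{f_1,\dotsc,f_p}$; the $d_c$'s are literally unique (they are the canonical mediating morphisms, as stated in Definition~\ref{def-transfo}); each $H'_a$ is unique up to iso as a pushout over $\tuple{r_a,d_a}$; and $H'$ is unique up to iso as a colimit over $\tuple{g'_1,\dotsc,g'_p}$. Chaining these canonical isomorphisms — and using that a compatible family of isomorphisms between two cones induces an isomorphism between their limits, and dually for colimits — gives an isomorphism between any two objects $H'$ produced by the construction. Strictly, one also observes that $\dirtransn{G}{\R}\subseteq\dirtrans{G}{\R}$, so that Definition~\ref{def-transfo} and Lemma~\ref{lm-exist-j} apply verbatim to $\Gamma$.

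The main obstacle — and the only real content beyond bookkeeping — is getting the \emph{orientation} of neutrality correct through the pushout step: one must check that in Lemma~\ref{lm-pushout} the neutral morphism can be taken to be $r_a$ and that the lemma then certifies $g'_a$ (rather than $n'_a$) as neutral, since it is $g'_a$ that feeds the final colimit. If instead the construction needed $n'_a$ neutral, or needed a pullback along a non-neutral morphism, the argument would break, because Lemma~\ref{lm-pullback} explicitly requires \emph{both} legs neutral and no such statement is available for general pushouts in $\FAS$. Everything else is a routine application of the four cited results, combined in the order limit, mediating morphism, pushouts, colimit.
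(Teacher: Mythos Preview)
Your proposal is correct and follows essentially the same route as the paper's proof: apply Corollary~\ref{cor-limit} to get $D'$ (using that the $f_a$ are neutral), then Lemma~\ref{lm-pushout} with the neutral leg $r_a$ to obtain each $H'_a$ with $g'_a$ neutral, then Corollary~\ref{cor-colimit} for the final colimit $H'$, with uniqueness coming from the universal properties at each stage. Your explicit check of the orientation of neutrality in the pushout step is accurate and in fact more careful than the paper, which takes this for granted.
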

\begin{proof}
  We prove that we can build a parallel coherent transformation of $G$
  by $\Gamma$ (see Definition \ref{def-transfo}). By hypothesis the
  $f_a$'s are neutral for all $1\leq a\leq p$ where $p=\card{\Gamma}$,
  hence by Corollary \ref{cor-limit} there exists a limit $\tuple{D',
    e_1,\dotsc,e_p}$ over $\tuple{f_1,\dotsc,f_p}$, which is therefore
  unique up to isomorphism. As $r_a$ is neutral, by Lemma
  \ref{lm-pushout} there exist pullbacks $\tuple{H'_a,g'_a,n'_a}$ over
  $\tuple{r_a,d_a}$ where the $g'_a$'s are neutral for all $1\leq
  a\leq p$, and they are unique up to isomorphism. By Corollary
  \ref{cor-colimit} there exists a colimit $\tuple{H',
    h_1,\dotsc,h_p}$ over $\tuple{g'_1,\dotsc,g'_p}$, and it is unique
  up to isomorphism. 
\end{proof}

A related issue relevant to the Double-Pushout approach is the
existence of pushout complements. Provided that a pushout complement
exist in $\aFcat$, it is easy to compute at least one pushout
complement in $\FAS$, as seen in the following result.

\begin{theorem}\label{th-poc}
  Let $\tuple{\sigma,\id{A}}:\tuple{F,A,f}\rightarrow \tuple{G,A,g}$
  and $\tuple{\tau',\alpha}:\tuple{G,A,g}\rightarrow \tuple{E,B,e}$ be
  two morphisms in $\FAS$, if the left square below is a pushout in $\aFcat$
  then so is the right square in $\FAS$, where for all $w\in\Vfun H$ we have
\[h(w)=  (e\circ\Vfun\sigma'(w)\setminus k(w)) \cup
  \bigcup_{u\in\invf{\Vfun \tau}(w)}\Sfun\alpha\circ f(u)\]
  with
  \[k(w)\subseteq
  \bigcup_{{v\in\invf{\Vfun \tau'}\circ\Vfun\sigma'(w)}}\Sfun\alpha\circ
  g(v).\]
  \begin{center}
    \begin{tikzpicture}[xscale=1.7,yscale=1.5]
      \node (F) at (0,1) {$F$}; \node (G) at (1,1) {$G$};
      \node (H) at (0,0) {$H$}; \node (E) at (1,0) {$E$};
      \node (F1) at (2,1) {$\tuple{F,A,f}$}; \node (G1) at (4,1) {$\tuple{G,A,g}$};
      \node (H1) at (2,0) {$\tuple{H,B,h}$}; \node (E1) at (4,0) {$\tuple{E,B,e}$};
      \path[->] (F) edge node[above, font=\footnotesize] {$\sigma$} (G);
      \path[->] (H) edge node[above, font=\footnotesize] {$\sigma'$} (E);
      \path[->] (F) edge node[left, font=\footnotesize] {$\tau$} (H);
      \path[->] (G) edge node[right, font=\footnotesize] {$\tau'$} (E);
      \path[->] (F1) edge node[above, font=\footnotesize] {$\tuple{\sigma,\id{A}}$} (G1);
      \path[->] (H1) edge node[above, font=\footnotesize] {$\tuple{\sigma',\id{B}}$} (E1);
      \path[->] (F1) edge node[left, font=\footnotesize] {$\tuple{\tau,\alpha}$} (H1);
      \path[->] (G1) edge node[right, font=\footnotesize] {$\tuple{\tau',\alpha}$} (E1);
    \end{tikzpicture}
  \end{center}
\end{theorem}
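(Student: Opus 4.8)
The plan is to exhibit the right square as the pushout furnished by Lemma~\ref{lm-pushout} applied to the span $\tuple{G,A,g}\xleftarrow{\tuple{\sigma,\id{A}}}\tuple{F,A,f}\xrightarrow{\tuple{\tau,\alpha}}\tuple{H,B,h}$, and then to identify the attribute map of that pushout with the given $e$. First I would check that $h$ is well defined as a function $\Vfun H\to\Sfun B$: for each $w$ the set $\invf{\Vfun\tau}(w)$ is contained in the finite set $\Vfun F$ (because $V$ takes values in $\FSets$), every $\Sfun\alpha\circ f(u)$ is a finite subset of $UB$, and $e\circ\Vfun\sigma'(w)\setminus k(w)$ is finite. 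Then I would verify that $\tuple{\tau,\alpha}$ and $\tuple{\sigma',\id{B}}$ are morphisms of $\FAS$ and that the right square commutes. For $\tuple{\tau,\alpha}$ this is immediate, since $\Sfun\alpha\circ f(u)$ is by construction one of the terms of the union defining $h\circ\Vfun\tau(u)$. For $\tuple{\sigma',\id{B}}$ one must show $h(w)\subseteq e\circ\Vfun\sigma'(w)$: the part $e\circ\Vfun\sigma'(w)\setminus k(w)$ is trivially contained, and for a term $\Sfun\alpha\circ f(u)$ with $\Vfun\tau(u)=w$ one chains $\Sfun\alpha\circ f(u)\subseteq\Sfun\alpha\circ g\circ\Vfun\sigma(u)\subseteq e\circ\Vfun\tau'\circ\Vfun\sigma(u)=e\circ\Vfun\sigma'\circ\Vfun\tau(u)=e\circ\Vfun\sigma'(w)$, using the morphism inequalities of $\tuple{\sigma,\id{A}}$ and $\tuple{\tau',\alpha}$ together with $\Vfun\sigma'\circ\Vfun\tau=\Vfun\tau'\circ\Vfun\sigma$ (obtained by applying $\Vfun$ to the commuting $\aFcat$-square); commutation of the right square reduces to $\sigma'\circ\tau=\tau'\circ\sigma$ in $\aFcat$.

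Granting this, Lemma~\ref{lm-pushout} applies with neutral morphism $\tuple{\sigma,\id{A}}$, and since $\tuple{E,\sigma',\tau'}$ is a pushout over $\tuple{\sigma,\tau}$ in $\aFcat$ by hypothesis, it produces the pushout of that span in $\FAS$ with underlying object $E$, cocone $\tuple{\tau',\alpha}$ and $\tuple{\sigma',\id{B}}$, and attribute map $e_0$ with $e_0(x)=\bigl(\bigcup_{v\in\invf{\Vfun\tau'}(x)}\Sfun\alpha\circ g(v)\bigr)\cup\bigl(\bigcup_{w\in\invf{\Vfun\sigma'}(x)}h(w)\bigr)$. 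Because pushouts are unique up to isomorphism, it then suffices to prove the set-theoretic identity $e_0=e$. The inclusion $e_0(x)\subseteq e(x)$ is the easy direction: each $\Sfun\alpha\circ g(v)$ with $v\in\invf{\Vfun\tau'}(x)$ is contained in $e\circ\Vfun\tau'(v)=e(x)$ since $\tuple{\tau',\alpha}$ is a morphism, and each $h(w)$ with $w\in\invf{\Vfun\sigma'}(x)$ is contained in $e\circ\Vfun\sigma'(w)=e(x)$ by the morphism property of $\tuple{\sigma',\id{B}}$ established above. For $e(x)\subseteq e_0(x)$ I would use that $V$ and $\injFS$ preserve pushouts, so that $\tuple{\Vfun E,\Vfun\tau',\Vfun\sigma'}$ is a pushout in $\Sets$ and $\Vfun\tau',\Vfun\sigma'$ are jointly surjective, and then split on whether $x$ lies in the image of $\Vfun\sigma'$. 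If $x=\Vfun\sigma'(w)$ then $e(x)\setminus k(w)\subseteq h(w)\subseteq e_0(x)$, while $k(w)\subseteq\bigcup_{v\in\invf{\Vfun\tau'}(x)}\Sfun\alpha\circ g(v)\subseteq e_0(x)$; hence $e(x)\subseteq e_0(x)$, and the freedom in the choice of $k(w)$ causes no harm.

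The step I expect to be the main obstacle is the remaining case of the reverse inclusion, where $x$ lies in the image of $\Vfun\tau'$ but not of $\Vfun\sigma'$. There $e_0(x)=\bigcup_{v\in\invf{\Vfun\tau'}(x)}\Sfun\alpha\circ g(v)$, and unwinding the pushout of $\Vfun\sigma$ and $\Vfun\tau$ in $\Sets$ shows that $\invf{\Vfun\tau'}(x)$ is a single vertex $v$ lying outside the image of $\Vfun\sigma$; what is then needed is that $e(x)$ does not strictly exceed $\Sfun\alpha\circ g(v)$, whereas the morphism hypotheses on $\tuple{\tau',\alpha}$ only deliver the opposite inclusion $\Sfun\alpha\circ g(v)\subseteq e(x)$. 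Bridging this is where the genuine work lies: one would need a ``tightness'' property of $e$ on such newly matched elements — a property one expects to hold for the host objects arising in the Double-Pushout applications of this construction — and making that property precise is the crux. Once $e(x)\subseteq e_0(x)$ is secured in this case too, combining the inclusions gives $e_0=e$, so the right square is indeed a pushout in $\FAS$, uniqueness up to isomorphism being automatic.
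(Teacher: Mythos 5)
Your route is exactly the paper's: verify that $h$ is a well-defined attribute map and that $\tuple{\tau,\alpha}$ and $\tuple{\sigma',\id{B}}$ are morphisms, then invoke Lemma~\ref{lm-pushout} and reduce the pushout property to the set-theoretic identity between $e$ and the attribute map $e_0$ produced by that lemma. Every step you actually carry out — the finiteness check, the two morphism checks (the paper gets $\Sfun\alpha\circ f(u)\subseteq e\circ\Vfun\sigma'(w)$ by composing $\tuple{\sigma'\circ\tau,\alpha}=\tuple{\tau'\circ\sigma,\alpha}$ rather than by monotonicity of $\Sfun\alpha$, an immaterial difference), the inclusion $e_0(x)\subseteq e(x)$, and the case $x\in\Vfun\sigma'(\Vfun H)$ of the reverse inclusion via $e(x)\subseteq k(w)\cup(e(x)\setminus k(w))$ — matches the paper's computation.

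The obstacle you single out is genuine, and it is instructive to see how the paper handles it: its chain of inclusions ends with
\[\bigcup_{w\in\invf{\Vfun\sigma'}(x)}\Bigl(e(x)\setminus\bigcup_{v\in\invf{\Vfun\tau'}(x)}\Sfun\alpha\circ g(v)\Bigr)\ \supseteq\ e(x)\setminus\bigcup_{v\in\invf{\Vfun\tau'}(x)}\Sfun\alpha\circ g(v),\]
justified only by the remark that ``if $\invf{\Vfun\sigma'}(x)=\ensvide$ then both sides are empty.'' Emptiness of the right-hand side is precisely the tightness property you ask for, namely $e(x)\subseteq\bigcup_{v\in\invf{\Vfun\tau'}(x)}\Sfun\alpha\circ g(v)$ for every $x$ outside the image of $\Vfun\sigma'$, and it is asserted, not derived. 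It does not follow from the stated hypotheses: take $\aFcat=\FSets$ with $V$ the identity functor, $F=H=\ensvide$, $G=\set{v}$, $E=\set{x}$ with $\tau'(v)=x$, $B=A$, $\alpha=\id{A}$, $g(v)=\ensvide$ and $e(x)$ a nonempty finite subset of $UA$; the $\aFcat$-square is a pushout and $\tuple{\tau',\alpha}$ is a morphism, yet Lemma~\ref{lm-pushout} yields $e_0(x)=\ensvide\neq e(x)$, so the $\FAS$-square is not a pushout. The theorem therefore needs an additional gluing-type hypothesis, $e(x)=\bigcup_{v\in\invf{\Vfun\tau'}(x)}\Sfun\alpha\circ g(v)$ for all $x\notin\Vfun\sigma'(\Vfun H)$ (the attributes of every deleted item must be fully prescribed by $G$); once that is granted your argument closes and coincides with the paper's. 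So while you have not completed the proof, your diagnosis is correct and locates the exact point at which the paper's own argument is incomplete.
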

\begin{proof}
  It is obvious that $h$ and $k$ are functions from $\Vfun H$ to $\Sfun B$,
  hence $\tuple{H,B,h}$ is an object in $\FAS$.
  For all $u\in \Vfun F$, we have $u\in
  \invf{\Vfun \tau}(\Vfun\tau(u))$, hence
  \[\Sfun\alpha\circ f(u)\subseteq
    \bigcup_{u'\in\invf{\Vfun \tau}(\Vfun\tau(u))}\Sfun\alpha\circ
    f(u') \subseteq h\circ \Vfun\tau (u)\] which proves that
  $\tuple{\tau,\alpha}$ is a morphism in $\FAS$. Similarly, in order
  to prove that $\tuple{\sigma',\id{B}}$ is a morphism we must show
  that for all $w\in\Vfun H$,
  $\Sfun\id{B}\circ h(w) = h(w)\subseteq e\circ\Vfun\sigma'(w)$. But
  obviously
  $e\circ\Vfun\sigma'(w)\setminus k(w) \subseteq
  e\circ\Vfun\sigma'(w)$, hence we only need to show that
  $\Sfun\alpha\circ f(u) \subseteq e\circ\Vfun\sigma'(w)$ for all
  $u\in \invf{\Vfun \tau}(w)$. Since
  $\tuple{\tau'\circ\sigma,\alpha} = \tuple{\sigma'\circ\tau,\alpha}$
  is a morphism, we get
  $\Sfun\alpha\circ f(u)\subseteq e\circ \Vfun(\sigma'\circ\tau)(u) =
  e\circ\Vfun\sigma'(w)$.

  In order to prove that $\tuple{E,B,e}$ is a pushout, according to
  Lemma~\ref{lm-pushout} we only need to show that for all $x\in\Vfun
  E$, $e(x)=e'(x)$ where
  \[      e'(x) \defeq \Biggl(\bigcup_{v\in \Vfun \invf{\tau'}(x)}\Sfun
      \alpha \circ g(v)\Biggr) \cup
      \Biggl(\bigcup_{w\in\invf{\Vfun \sigma'}(x)}\Sfun\id{B}\circ h(w)
      \Biggr).\]
    Since $\tuple{\tau',\alpha}$ is an isomorphism then
    $\Sfun\alpha\circ g(v)\subseteq e\circ\Vfun\tau'(v)$ for all
    $v\in\Vfun G$, hence in particular
    \[\bigcup_{v\in \Vfun \invf{\tau'}(x)}\Sfun \alpha \circ g(v)
      \subseteq \bigcup_{v\in \Vfun
        \invf{\tau'}(x)}e\circ\Vfun\tau'(v) \subseteq e(x).\]
    Similarly, we get
  \begin{align*}
      \bigcup_{w\in\invf{\Vfun \sigma'}(x)}\Sfun\id{B}\circ h(w) & =
      \Biggl(\bigcup_{w\in\invf{\Vfun \sigma'}(x)} e\circ\Vfun\sigma'(w)\setminus k(w)
      \Biggr)\\ &\quad \cup \Biggl(\bigcup_{u\in\invf{\Vfun \tau}\circ\invf{\Vfun \sigma'}(x)}
      \Sfun\alpha\circ f(u) \Biggr)\\
      &\subseteq  \Biggl(\bigcup_{w\in\invf{\Vfun \sigma'}(x)} e\circ\Vfun\sigma'(w)
      \Biggr)\\ &\quad \cup \Biggl(\bigcup_{u\in\invf{\Vfun \tau}\circ\invf{\Vfun \sigma'}(x)}
      e\circ \Vfun(\sigma'\circ\tau)(u) \Biggr)\\
    &\subseteq e(x)
  \end{align*}
  hence $e'(x)\subseteq e(x)$, Conversely, we see that
  \begin{align*}
    \bigcup_{w\in\invf{\Vfun \sigma'}(x)}\Sfun\id{B}\circ h(w) & \supseteq
      \bigcup_{w\in\invf{\Vfun \sigma'}(x)} e\circ\Vfun\sigma'(w)\setminus k(w)\\
      & \supseteq \bigcup_{w\in\invf{\Vfun \sigma'}(x)}
        e\circ\Vfun\sigma'(w)\setminus
        \Bigg(\bigcup_{v\in\invf{\Vfun \tau'}\circ\Vfun\sigma'(w)}\Sfun\alpha\circ g(v)\Bigg)\\
      & \supseteq \bigcup_{w\in\invf{\Vfun \sigma'}(x)}
        e(x)\setminus\Bigg(
        \bigcup_{v\in\invf{\Vfun \tau'}(x)}\Sfun\alpha\circ
        g(v)\Bigg)\\
      & \supseteq e(x)\setminus\Bigg(
        \bigcup_{v\in\invf{\Vfun \tau'}(x)}\Sfun\alpha\circ g(v)\Bigg)
  \end{align*}
  since if $\invf{\Vfun \sigma'}(x)=\ensvide$ then both sides are
  empty. We conclude that
  \[e(x) \subseteq \Bigg(
        \bigcup_{v\in\invf{\Vfun \tau'}(x)}\Sfun\alpha\circ g(v)\Bigg)
         \cup \Bigg(e(x)\setminus\Bigg(
        \bigcup_{v\in\invf{\Vfun \tau'}(x)}\Sfun\alpha\circ g(v) \Bigg)
        \Bigg) \subseteq e'(x).\]
\end{proof}

In practice it seems reasonable to choose the smallest possible sets
for the $h(w)$'s, and hence to take $k(w) = 
\bigcup_{v\in\invf{\Vfun \tau'}\circ\Vfun\sigma'(w)}\Sfun\alpha\circ g(v)$. 

\section{Examples}\label{sec-example}

All the necessary tools are now available to develop in detail the
example of Sections~\ref{sec-intro} and \ref{sec-weakspan}. As
suggested above we take the category of finite graphs for $\aFcat$ and
the category of $\Sigma$-algebras, where $\Sigma = \set{+}$ and $+$ is
a binary function symbol, for $\attrCat$. Among the objects of
$\attrCat$ we only consider the standard $\Sigma$-algebra $\Nat$ and
the algebra of $\Sigma$-terms on the set of variables $\set{u,v}$,
here denoted $T$. The objects $\tuple{F,A,f}$ of $\FAS$ will be
specified by attributed graphs indexed by $A$, and since the
attributes of nodes will only be $\ensvide$ or
singletons\footnote{This property is not generally true, but happens
  to be true in our example.}, and the attributes of arrows always
$\ensvide$, nodes will be represented by circles containing either
nothing or an element of $A$ (as in Section~\ref{sec-weakspan}). The
morphisms $\tuple{\sigma,\alpha}$ will only be specified as $\alpha$
since the graph morphism $\sigma$ from the domain to the codomain's
graphs will either be unique or specified by a dotted arrow (except
for the $j$ morphisms). In
category $\attrCat$, we consider the unique morphism
$m:T\rightarrow \Nat$ such that $m(u)=1$ and $m(v)=2$.

We start from the finitely attributed graph $G=\grfib{1}{2}_{\Nat}$
that corresponds to a correct state, and we interpret the
transformations $\adt_1$ and $\adt_2$ of Figure~\ref{fig-adt12} as
diagrams in $\FAS$. 
They are obviously parallel coherent, we can therefore build a
parallel coherent transformation of $G$ by $\set{\adt_1,\adt_2}$,
given in Figure~\ref{fig-2pct} top. The pushouts and pullbacks are
computed as in Lemmas~\ref{lm-pushout} and
\ref{lm-pullback}. The result of this transformation is the finitely
attributed graph $\grfib{2}{3}_\Nat$ that corresponds to a correct
state.

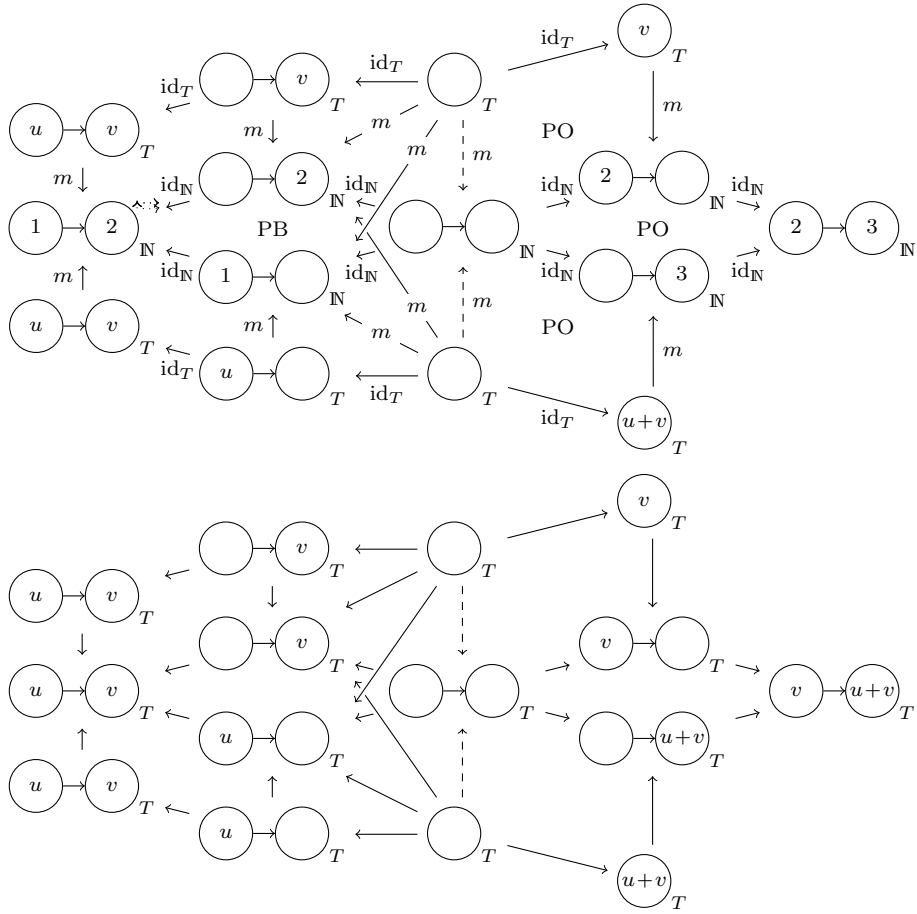
\begin{figure}[t]
  \centering
  \begin{tikzpicture}[xscale=2.5, yscale=0.65, remember picture]
  \node (G) at (0,0) {$\grfib{1}{2}_\Nat$};
  \node (D) at (2,0) {$\begin{tikzpicture}[xscale=0.5, remember picture]
  \node [draw, circle, font=\footnotesize, minimum width=7mm] (XD) at (-1,0) {};
  \node [draw, circle, font=\footnotesize, minimum width=7mm] (YD) at (1,0) {};
  \path[->] (XD) edge (YD);
\end{tikzpicture}_\Nat$};
  \node (H) at (4,0) {$\grfib{2}{3}_\Nat$};
  \node (L1) at (0,2) {$\grfib{$u$}{$v$}_T$};
  \node (K1) at (1,3) {$\begin{tikzpicture}[xscale=0.5, remember picture]
  \node [draw, circle, font=\footnotesize, minimum width=7mm] (XK1) at (-1,0) {};
  \node [draw, circle, font=\footnotesize, minimum width=7mm] (YK1) at (1,0) {\raisebox{-1ex}{\smash{$v$}}};
  \path[->] (XK1) edge (YK1);
\end{tikzpicture}_T$};
  \node (D1) at (1,1) {$\grfib{}{2}_\Nat$};
  \node (I1) at (2,3) {$\begin{tikzpicture}[remember picture]
  \node [draw, circle, font=\footnotesize, minimum width=7mm] (XI1) at (0,0) {};
\end{tikzpicture}_T$};
  \node (R1) at (3,4) {$\begin{tikzpicture}[remember picture]
  \node [draw, circle, font=\footnotesize, minimum width=7mm] (XR1) at (0,0) {\raisebox{-1ex}{\smash{$v$}}};
\end{tikzpicture}_T$};
  \node (H1) at (3,1) {$\begin{tikzpicture}[xscale=0.5, remember picture]
  \node [draw, circle, font=\footnotesize, minimum width=7mm] (XH1) at (-1,0) {\raisebox{-1ex}{\smash{2}}};
  \node [draw, circle, font=\footnotesize, minimum width=7mm] (YH1) at (1,0) {};
  \path[->] (XH1) edge (YH1);
\end{tikzpicture}_\Nat$};
  \node (Ln) at (0,-2) {$\grfib{$u$}{$v$}_T$};
  \node (Kn) at (1,-3) {$\begin{tikzpicture}[xscale=0.5, remember picture]
  \node [draw, circle, font=\footnotesize, minimum width=7mm] (XK2) at (-1,0) {\raisebox{-1ex}{\smash{$u$}}};
  \node [draw, circle, font=\footnotesize, minimum width=7mm] (YK2) at (1,0) {};
  \path[->] (XK2) edge (YK2);
\end{tikzpicture}_T$};
  \node (Dn) at (1,-1) {$\grfib{1}{}_\Nat$};
  \node (In) at (2,-3) {$\begin{tikzpicture}[remember picture]
  \node [draw, circle, font=\footnotesize, minimum width=7mm] (YI2) at (0,0) {};
\end{tikzpicture}_T$};
  \node (Rn) at (3,-4) {$\begin{tikzpicture}[remember picture]
  \node [draw, circle, font=\footnotesize, minimum width=7mm] (YR2) at (0,0) {\raisebox{-1ex}{\smash{\upv}}};
\end{tikzpicture}_T$};
  \node (Hn) at (3,-1) {$\begin{tikzpicture}[xscale=0.5, remember picture]
  \node [draw, circle, font=\footnotesize, minimum width=7mm] (XH2) at (-1,0) {};
  \node [draw, circle, font=\footnotesize, minimum width=7mm] (YH2) at (1,0) {\raisebox{-1ex}{\smash{3}}};
  \path[->] (XH2) edge (YH2);
\end{tikzpicture}_\Nat$};
  \node at (1,0) {\footnotesize PB}; \node at (3,0) {\PO}; \node at (2.5,2) {\PO}; \node at (2.5,-2) {\PO};
  \path[->] (K1) edge node[above, font=\footnotesize] {$\id{T}$} (L1) ;
  \path[->] (L1) edge node[left, font=\footnotesize] {$m$} (G);
  \path[->] (K1) edge node[left, font=\footnotesize] {$m$} (D1);
  \path[->] (D1) edge node[above, font=\footnotesize] {$\id{\Nat}$} (G);
  \path[->] (I1) edge node[above, font=\footnotesize] {$\id{T}$} (K1);
  \draw [overlay, ->, dotted] (XI1) to[bend right = 80] (XK1);
  \path[->] (I1) edge node[above, font=\footnotesize] {$\id{T}$} (R1);
  \path[->] (R1) edge node[right, font=\footnotesize] {$m$} (H1);
  \draw [overlay, ->, dotted] (XR1) to[bend right = 2] (XH1);
  \path[->] (D) edge node[above, font=\footnotesize, near end] {$\id{\Nat}$} (D1);
  \path[->] (D) edge node[above, font=\footnotesize] {$\id{\Nat}$} (H1);
  \path[->] (H1) edge node[above, font=\footnotesize] {$\id{\Nat}$} (H);
  \path[->,dashed] (I1) edge node[right, font=\footnotesize] {$m$} (D);
  \draw [overlay, ->, dotted] (XI1) to[bend right = 2] (XD);
  \path[->] (Kn) edge node[below, font=\footnotesize] {$\id{T}$} (Ln) ;
  \path[->] (Ln) edge node[left, font=\footnotesize] {$m$} (G);
  \path[->] (Kn) edge node[left, font=\footnotesize] {$m$} (Dn);
  \path[->] (Dn) edge node[below, font=\footnotesize] {$\id{\Nat}$} (G);
  \path[->] (In) edge node[below, font=\footnotesize] {$\id{T}$} (Kn);
  \draw [overlay, ->, dotted] (YI2) to[bend right = 40] (YK2);
  \path[->] (In) edge node[below, font=\footnotesize] {$\id{T}$} (Rn);
  \path[->] (Rn) edge node[right, font=\footnotesize] {$m$} (Hn);
  \draw [overlay, ->, dotted] (YR2) to[bend right = 7] (YH2);
  \path[->] (D) edge node[below, font=\footnotesize, near end] {$\id{\Nat}$} (Dn);
  \path[->] (D) edge node[below, font=\footnotesize] {$\id{\Nat}$} (Hn);
  \path[->] (Hn) edge node[below, font=\footnotesize] {$\id{\Nat}$} (H);
  \path[->,dashed] (In) edge node[right, font=\footnotesize] {$m$}
  (D);
  \draw [overlay, ->, dotted] (YI2) to[bend right = 7] (YD);
  \path[->] (I1) edge node[fill=white, font=\footnotesize] {$m$} (D1);
  \path[->] (In) edge node[fill=white, font=\footnotesize] {$m$} (Dn);
  \path[-] (In) edge[draw=white, line width=3pt]  (D1.south east);
  \path[->] (In) edge node[fill=white, font=\footnotesize, near start]
  {$m$} (D1.south east);
  \path[-] (I1) edge[draw=white, line width=3pt]  (Dn.north east);
  \path[->] (I1) edge node[fill=white, font=\footnotesize, near start]
  {$m$} (Dn.north east);
\end{tikzpicture}\\
  \begin{tikzpicture}[xscale=2.5, yscale=0.63]
  \node (G) at (0,0) {$\grfib{$u$}{$v$}_T$};
  \node (D) at (2,0) {$\grfib{}{}_T$};
  \node (H) at (4,0) {$\grfib{$v$}{\upv}_T$};
  \node (L1) at (0,2) {$\grfib{$u$}{$v$}_T$};
  \node (K1) at (1,3) {$\grfib{}{$v$}_T$};
  \node (D1) at (1,1) {$\grfib{}{$v$}_T$};
  \node (I1) at (2,3) {$\grfibu{}_T$};
  \node (R1) at (3,4) {$\grfibu{$v$}_T$};
  \node (H1) at (3,1) {$\grfib{$v$}{}_T$};
  \node (Ln) at (0,-2) {$\grfib{$u$}{$v$}_T$};
  \node (Kn) at (1,-3) {$\grfib{$u$}{}_T$};
  \node (Dn) at (1,-1) {$\grfib{$u$}{}_T$};
  \node (In) at (2,-3) {$\grfibu{}_T$};
  \node (Rn) at (3,-4) {$\grfibu{\upv}_T$};
  \node (Hn) at (3,-1) {$\grfib{}{\upv}_T$};
  \path[->] (K1) edge  (L1) ;
  \path[->] (L1) edge  (G);
  \path[->] (K1) edge  (D1);
  \path[->] (D1) edge  (G);
  \path[->] (I1) edge  (K1);
  \path[->] (I1) edge  (R1);
  \path[->] (R1) edge  (H1);
  \path[->] (D) edge (D1);
  \path[->] (D) edge (H1);
  \path[->] (H1) edge (H);
  \path[->,dashed] (I1) edge (D);
  \path[->] (Kn) edge  (Ln) ;
  \path[->] (Ln) edge  (G);
  \path[->] (Kn) edge  (Dn);
  \path[->] (Dn) edge  (G);
  \path[->] (In) edge  (Kn);
  \path[->] (In) edge  (Rn);
  \path[->] (Rn) edge  (Hn);
  \path[->] (D) edge  (Dn);
  \path[->] (D) edge  (Hn);
  \path[->] (Hn) edge  (H);
  \path[->,dashed] (In) edge (D);
  \path[->] (I1) edge  (D1);
  \path[->] (In) edge  (Dn);
  \path[-] (In) edge[draw=white, line width=3pt]  (D1.south east);
  \path[->] (In) edge  (D1.south east);
  \path[-] (I1) edge[draw=white, line width=3pt]  (Dn.north east);
  \path[->] (I1) edge  (Dn.north east);
\end{tikzpicture}  
  \caption{Two parallel coherent transformations}
  \label{fig-2pct}
\end{figure}

We also notice that our rules (weak spans) both have the same
left-hand side $L$. The generality of the algebraic approach thus
allows us to apply both rules to $L$, which again yields parallel
coherent direct transformations and hence the parallel coherent
transformation given in Figure~\ref{fig-2pct} bottom (all morphisms
are labelled by $\id{T}$, hence we omit these, and we also omit dotted
arrows which are the same as above).  From this diagram we can extract
the following span, which describes the parallel coherent
transformation as a single graph transformation rule, already
mentioned in Section \ref{sec-intro}.
\begin{center}
  \begin{tikzpicture}[xscale=3]
  \node (L) at (0,0) {$\grfib{$u$}{$v$}_T$};
  \node (K) at (1,0) {$\grfib{}{}_T$};
  \node (R) at (2,0) {$\grfib{$v$}{\upv}_T$};
  \path[->] (K) edge node[above, font=\footnotesize] {$\id{T}$} (L);
  \path[->] (K) edge node[above, font=\footnotesize] {$\id{T}$} (R);
\end{tikzpicture}
\end{center}

Another important class of examples is provided by cellular automata,
where the states of cells at a given generation are computed in
parallel from the states of the previous generation. The local
transitions may not be independent from each other, which we
illustrate on the Hex-Ulam-Warburton automaton, see
\cite{Khovanova18}. It has the same rule as the Ulam-Warburton
automaton, namely that a new cell is born if it is adjacent to exactly
one live cell, but it grows in the hexagonal grid. The first
generations are depicted in Figure \ref{fig-HUW}, and give rise to
nice fractal structures as shown in \cite{Khovanova18}.
 
\begin{figure}[t]
  \centering
\begin{tikzpicture}[xscale=0.15,yscale=0.289*0.15]
\fillhex{0}{0}{lblue}; \hexmap{4};
\end{tikzpicture}
\begin{tikzpicture}[xscale=0.15,yscale=0.289*0.15]
\hexmapb{1}; \hexmap{4};
\end{tikzpicture}
\begin{tikzpicture}[xscale=0.15,yscale=0.289*0.15]
  \hexmapb{1};
  \fillhex{-2}{0}{lblue};\fillhex{0}{2}{lblue};\fillhex{2}{2}{lblue};
  \fillhex{2}{0}{lblue};\fillhex{0}{-2}{lblue};\fillhex{-2}{-2}{lblue};
 \hexmap{4};
\end{tikzpicture}
\begin{tikzpicture}[xscale=0.15,yscale=0.289*0.15]
\hexmapb{3}; \fillhex{-1}{1}{white};  \fillhex{1}{2}{white};  \fillhex{2}{1}{white};
\fillhex{1}{-1}{white}; \fillhex{-1}{-2}{white};  \fillhex{-2}{-1}{white};
\hexmap{4};
\end{tikzpicture}
    \caption{Generations 0, 1, 2 and 3 of the Hex-Ulam-Warburton 
    automaton}\label{fig-HUW}
\end{figure}
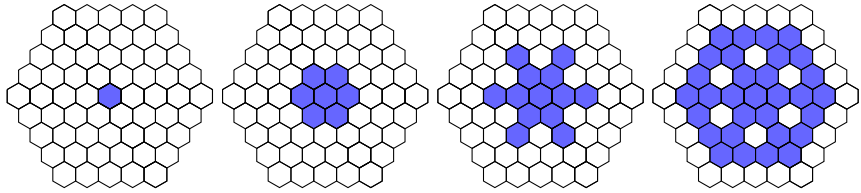

The six transitions that yield Generation 1 are not independent since
they obviously cannot be obtained sequentially; the same is true of
the 24 transitions that yield Generation 3. In contrast, the 6
transitions that yield Generation 2 are independent and can be
produced in any order.

In our framework the dead cells are labelled by a singleton, say
$\set{0}$ (represented by 
\raisebox{-2.5pt}{\begin{tikzpicture}[xscale=0.15,yscale=0.289*0.15]
\drawhex{0}{0};
\end{tikzpicture}}), live cells by another singleton, say $\set{1}$ 
(represented by 
\raisebox{-2.5pt}{\begin{tikzpicture}[xscale=0.15,yscale=0.289*0.15]
\fillhex{0}{0}{lblue};\drawhex{0}{0};
\end{tikzpicture}}), and as above we need cells labelled by $\ensvide$
(represented by 
\raisebox{-2.5pt}{\begin{tikzpicture}[xscale=0.15,yscale=0.289*0.15]
\fillhex{0}{0}{lgray};\drawhex{0}{0};
\end{tikzpicture}}), hence we only need a category $\attrCat$ of attributes with
the single object $\set{0,1}$ and its identity morphism (all
morphisms of finitely attributed structures are therefore neutral).
Assuming for $\aFcat$ a category of finite hexagonal grids
where morphisms map adjacent cells to adjacent cells (or equivalently,
a morphism is a translation followed by a rotation
of $\frac{k\pi}{3}$ for some $k\in \Int/6\Int$),
the state transitions can be represented by the following weak span
\begin{center}
  \begin{tikzpicture}[xscale=1.7]
    \node (L) at (0,0) {\raisebox{-2.5pt}{\begin{tikzpicture}[xscale=0.15,yscale=0.289*0.15]
\fillhex{-1}{0}{lblue};\hexmap{1};
\end{tikzpicture}}};
    \node (K) at (1.2,0) {\raisebox{-2.5pt}{\begin{tikzpicture}[xscale=0.15,yscale=0.289*0.15]
\fillhex{-1}{0}{lblue};\fillhex{0}{0}{lgray};\hexmap{1};
\end{tikzpicture}}};
    \node (I) at (2.3,0) {\raisebox{-2.5pt}{\begin{tikzpicture}[xscale=0.15,yscale=0.289*0.15]
\fillhex{0}{0}{lgray};\drawhex{0}{0};
\end{tikzpicture}}};
    \node (R) at (3.2,0) {\raisebox{-2.5pt}{\begin{tikzpicture}[xscale=0.15,yscale=0.289*0.15]
\fillhex{0}{0}{lblue};\drawhex{0}{0};
\end{tikzpicture}}};
    \path[->] (K) edge node[above, font=\footnotesize] {$l$} (L);
    \path[->] (I) edge node[above, font=\footnotesize] {$i$} (K);
    \path[->] (I) edge node[above, font=\footnotesize] {$r$} (R);
  \end{tikzpicture}
\end{center}
where $i$ maps the cell of $I$ to the center cell of $K$. There are
exactly 6 matchings $m_1,\dotsc,m_6$ of $L$ in Generation 0, centered on the 6 cells
adjacent to the live cell and rotated by $\frac{k\pi}{3}$ for
$k=0,\dotsc,5$ respectively. Hence there are 6 direct transformations
$\adt_1,\dotsc,\adt_6$ (not depicted) of
Generation 0 that yield the parallel coherent transformation in Figure
\ref{fig-pctHUW},
where only the matchings $m_1$ and $m_6$ are depicted.
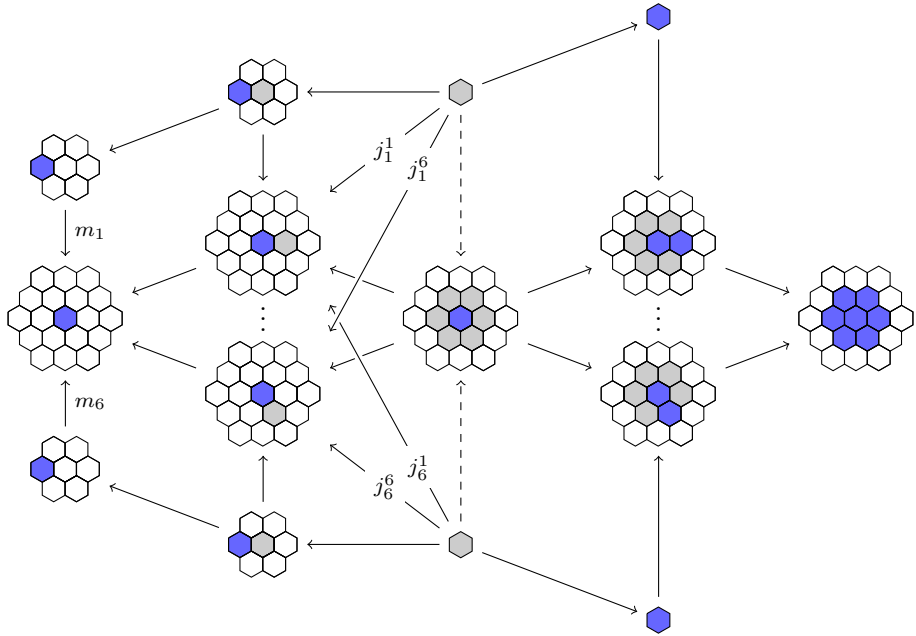
\begin{figure}[t]
  \centering
    \begin{tikzpicture}[xscale=2.6, yscale=1]
  \node (G) at (0,0) {\raisebox{-2.5pt}{\begin{tikzpicture}[xscale=0.15,yscale=0.289*0.15]
        \fillhex{0}{0}{lblue}; \hexmap{2}; \end{tikzpicture}}};
  \node (D) at (2,0) {\raisebox{-2.5pt}{\begin{tikzpicture}[xscale=0.15,yscale=0.289*0.15]
        \fillhex{0}{0}{lblue}; \fillhex{-1}{0}{lgray};\fillhex{0}{1}{lgray};
        \fillhex{1}{1}{lgray}; \fillhex{1}{0}{lgray};\fillhex{0}{-1}{lgray};
        \fillhex{-1}{-1}{lgray}; \hexmap{2}; \end{tikzpicture}}};
  \node (H) at (4,0) {\raisebox{-2.5pt}{\begin{tikzpicture}[xscale=0.15,yscale=0.289*0.15]
        \hexmapb{1};\hexmap{2}; \end{tikzpicture}}};
  \node at (1,0.1) {$\vdots$};
  \node at (3,0.1) {$\vdots$};
  \node (L1) at (0,2) {\raisebox{-2.5pt}{\begin{tikzpicture}[xscale=0.15,yscale=0.289*0.15]
        \fillhex{-1}{0}{lblue};\hexmap{1}; \end{tikzpicture}}};
  \node (K1) at (1,3) {\raisebox{-2.5pt}{\begin{tikzpicture}[xscale=0.15,yscale=0.289*0.15]
        \fillhex{-1}{0}{lblue};\fillhex{0}{0}{lgray};\hexmap{1}; \end{tikzpicture}}};
  \node (D1) at (1,1) {\raisebox{-2.5pt}{\begin{tikzpicture}[xscale=0.15,yscale=0.289*0.15]
        \fillhex{0}{0}{lblue}; \fillhex{1}{0}{lgray}; \hexmap{2}; \end{tikzpicture}}};
  \node (I1) at (2,3) {\raisebox{-2.5pt}{\begin{tikzpicture}[xscale=0.15,yscale=0.289*0.15]
        \fillhex{0}{0}{lgray};\drawhex{0}{0}; \end{tikzpicture}}};
  \node (R1) at (3,4) {\raisebox{-2.5pt}{\begin{tikzpicture}[xscale=0.15,yscale=0.289*0.15]
        \fillhex{0}{0}{lblue};\drawhex{0}{0}; \end{tikzpicture}}};
  \node (H1) at (3,1) {\raisebox{-2.5pt}{\begin{tikzpicture}[xscale=0.15,yscale=0.289*0.15]
        \fillhex{0}{0}{lblue}; \fillhex{-1}{0}{lgray};\fillhex{0}{1}{lgray};
        \fillhex{1}{1}{lgray}; \fillhex{1}{0}{lblue};\fillhex{0}{-1}{lgray};
        \fillhex{-1}{-1}{lgray}; \hexmap{2}; \end{tikzpicture}}};
  \node (Ln) at (0,-2) {\raisebox{-2.5pt}{\begin{tikzpicture}[xscale=0.15,yscale=0.289*0.15]
        \fillhex{-1}{0}{lblue};\hexmap{1}; \end{tikzpicture}}};
  \node (Kn) at (1,-3) {\raisebox{-2.5pt}{\begin{tikzpicture}[xscale=0.15,yscale=0.289*0.15]
        \fillhex{-1}{0}{lblue};\fillhex{0}{0}{lgray};\hexmap{1}; \end{tikzpicture}}};
  \node (Dn) at (1,-1) {\raisebox{-2.5pt}{\begin{tikzpicture}[xscale=0.15,yscale=0.289*0.15]
        \fillhex{0}{0}{lblue}; \fillhex{0}{-1}{lgray}; \hexmap{2}; \end{tikzpicture}}};
  \node (In) at (2,-3) {\raisebox{-2.5pt}{\begin{tikzpicture}[xscale=0.15,yscale=0.289*0.15]
        \fillhex{0}{0}{lgray};\drawhex{0}{0}; \end{tikzpicture}}};
  \node (Rn) at (3,-4) {\raisebox{-2.5pt}{\begin{tikzpicture}[xscale=0.15,yscale=0.289*0.15]
        \fillhex{0}{0}{lblue};\drawhex{0}{0}; \end{tikzpicture}}};
  \node (Hn) at (3,-1) {\raisebox{-2.5pt}{\begin{tikzpicture}[xscale=0.15,yscale=0.289*0.15]
        \fillhex{0}{0}{lblue}; \fillhex{-1}{0}{lgray};\fillhex{0}{1}{lgray};
        \fillhex{1}{1}{lgray}; \fillhex{1}{0}{lgray};\fillhex{0}{-1}{lblue};
        \fillhex{-1}{-1}{lgray}; \hexmap{2}; \end{tikzpicture}}};
  \path[->] (K1) edge (L1) ; 
  \path[->] (L1) edge node[right, font=\footnotesize] {$m_1$} (G);
  \path[->] (K1) edge  (D1);
  \path[->] (D1) edge (G); 
  \path[->] (I1) edge (K1); 
  \path[->] (I1) edge (R1); 
  \path[->] (R1) edge (H1); 
  \path[->] (D) edge (D1);
  \path[->] (D) edge (H1); 
  \path[->] (H1) edge (H); 
  \path[->,dashed] (I1) edge (D); 
  \path[->] (Kn) edge (Ln) ;
  \path[->] (Ln) edge node[right, font=\footnotesize] {$m_6$} (G); 
  \path[->] (Kn) edge  (Dn);
  \path[->] (Dn) edge (G); 
  \path[->] (In) edge (Kn); 
  \path[->] (In) edge (Rn); 
  \path[->] (Rn) edge (Hn); 
  \path[->] (D) edge (Dn); 
  \path[->] (D) edge  (Hn); 
  \path[->] (Hn) edge (H); 
  \path[->,dashed] (In) edge (D); 
  \path[->] (I1) edge node[fill=white, font=\footnotesize] {$j^1_1$} (D1);
  \path[->] (In) edge node[fill=white, font=\footnotesize] {$j^6_6$} (Dn);
  \path[-] (In) edge[draw=white, line width=3pt]  (D1.south east);
  \path[->] (In) edge node[fill=white, font=\footnotesize, near start]
  {$j_6^1$} (D1.south east);
  \path[-] (I1) edge[draw=white, line width=3pt]  (Dn.north east);
  \path[->] (I1) edge node[fill=white, font=\footnotesize, near start]
  {$j^6_1$} (Dn.north east);
\end{tikzpicture}
\caption{The parallel coherent transformation of Generation 0}\label{fig-pctHUW}
\end{figure}

Note that morphism $j_1^6$ maps the cell of $I$ to the dead cell (not
the empty cell) of $D_6$ adjacent to the east border of its live cell,
and similarly $j_6^1$ maps the cell of $I$ to the cell of $D_1$
adjacent to the south east border of its live cell, which proves that
the pair $\adt_1$, $\adt_6$ is parallel coherent, and for reasons of
symmetry the set $\set{\adt_1,\dotsc,\adt_6}$ is parallel coherent.

\section{Related and Future Work}\label{sec-conclusion}

Parallel graph rewriting has already been considered in the
literature. In the mid-seventies, H. Ehrig and H.-J. Kreowski
\cite{EhrigK76} tackled the problem of parallel graph transformations
and introduced the notion of parallel independence. This pioneering work has been
considered for several algebraic graph transformation approaches, see
\cite{handbook3} as well as the more recent contributions
\cite{CorradiniDLRMCA18,Lowe18,KreowskiKL18a}. However, this stream of
work departs drastically from ours, where parallel
derivations are not meant to be sequentialized.
 
In \cite[chapter~14]{Plasmeijer93}, parallel graph transformations
have also been studied in order to improve the operational semantics
of the functional programming language CLEAN \cite{cleanURL}, where
parallelism is considered under an interleaving semantics.  This is
also the case for other frameworks where massive parallel graph
transformations is defined so that it can be simulated by sequential
rewriting e.g., \cite{EchahedJ98,KreowskiKL18a,KreowskiK11}.

Non independent parallelism has been considered in the Double-Pushout
approach, see e.g. \cite{Taentzer97} where rules can be amalgamated by
agreeing on common deletions and preservations; this results in
\emph{star-parallel} derivations that can be reversed, which is not
the case of parallel coherent transformations.  In
\cite{KniemeyerBHK07}, a framework based on the algebraic
Single-Pushout approach has been proposed where conflicts between
parallel transformations are allowed but requires the user to solve
them by providing the right control flow.

 


The present work stems from \cite{BoydelatE18} where a set-theoretic
framework has been proposed where truly parallel rewrite steps can be
expressed and combined with group-theoretic notions necessary for
handling symmetries in production rules.



One issue that needs to be investigated is the relationship between
sequential and parallel independence for Weak Double-Pushouts.
Another issue is the extension of the notion of parallel coherence to
other algebraic approaches. Indeed, parallel coherent transformations,
as presented in Figure~\ref{fig-pct}, depend on the objects $D_i$. In
the present paper, every $D_i$ is built as a pushout complement.  But
they can be constructed differently: as final pullback complements
(FPBC's) in the Sesqui-pushout approach \cite{CorradiniHHK06}, or as
pullbacks in the AGREE \cite{CorradiniDEPR15} or PBPO
\cite{CorradiniDEPR17} approaches. This would possibly allow the
cloning of graph items to be shared among parallel derivations.
 


\end{document}